\newcommand{\Null}{\mathrm{Null}}
\newcommand{\Range}{\mathrm{Range}}
\newcommand{\one}{\mathbf{1}}
\newcommand{\rank}{\mathrm{rank}}
\newcommand{\myspan}{\mathrm{span}}
\newcommand{\mydiag}{\mathrm{diag}}
\newcommand{\blkdiag}{\mathrm{blkdiag}}
\newcommand{\T}{\mathrm{T}}
\newcommand{\R}{\mathbb{R}}
\newcommand{\G}{\mathcal{G}}
\newcommand{\E}{\mathcal{E}}
\newcommand{\V}{\mathcal{V}}
\newcommand{\N}{\mathcal{N}}
\newcommand{\RB}{R_{\text{\scriptsize{B}}}}
\newcommand{\FB}{F_{\text{\scriptsize{B}}}}
\newcommand{\LB}{L_{\text{\scriptsize{B}}}}
	\tikzstyle{frame} = [draw, -latex]
	\tikzstyle{line} = [draw]
	\tikzstyle{line2} = [draw, dashdotted]
	\tikzstyle{line3} = [draw, dashed]
	\tikzstyle{line3UD} = [draw, dashed]
	\tikzstyle{place} = [circle, draw=black, fill=white, thick, inner sep=2pt, minimum size=1mm]
	\tikzstyle{place2} = [circle, draw=black, fill=black, thick, inner sep=2pt, minimum size=1mm]
	\tikzstyle{placeRed} = [circle, draw=red, fill=red, thick, inner sep=2pt, minimum size=1mm]
	\tikzstyle{vertex} = [circle, draw=black, fill=black, thick, inner sep=2pt, minimum size=1mm]
\tikzset{
  arrow/.pic={\path[tips,every arrow/.try,->,>=#1] (0,0) -- +(.1pt,0);},
  pics/arrow/.default={triangle 90}
}
\newtheorem{definition}{Definition}  
\newtheorem{lemma}{Lemma} 
\newtheorem{theorem}{Theorem} 
\newtheorem{proposition}{Proposition}
\newtheorem{conjecture}{Conjecture}
\begin{document}
\begin{frontmatter}

\title{Characterizing bearing equivalence in directed graphs \thanksref{footnoteinfo}} 

\thanks[footnoteinfo]{The work of Zhiyong Sun was partially supported by a starting grant from Eindhoven Artificial Intelligence Systems Institute (EAISI), Eindhoven, the Netherlands.   }

\author[First]{Zhiyong Sun} 
\author[Second]{Shiyu Zhao} 
\author[Fourth]{Daniel Zelazo} 

\address[First]{Department of Electrical Engineering, Eindhoven University of Technology (TU/e), Eindhoven, The Netherlands (e-mail: z.sun@tue.nl, sun.zhiyong.cn@gmail.com)}
\address[Second]{School of
Engineering, Westlake University, Hangzhou 310024, China (e-mail: zhaoshiyu@westlake.edu.cn).}
\address[Fourth]{Faculty of Aerospace Engineering, Technion-Israel Institute of Technology, Haifa, Israel. (e-mail: dzelazo@technion.ac.il).}

\begin{abstract}                
In this paper, we study bearing equivalence in directed graphs. We first give a strengthened definition of bearing equivalence based on the  \textit{kernel equivalence} relationship between bearing rigidity matrix and bearing Laplacian matrix. We then present several conditions to characterize bearing equivalence for both directed acyclic and cyclic graphs. These conditions involve the spectrum and null space of the associated bearing Laplacian matrix for a directed bearing formation. For directed acyclic graphs, all eigenvalues of the associated bearing Laplacian are real and nonnegative,   while for directed graphs containing cycles, the bearing Laplacian can have eigenvalues with negative real parts. Several examples of bearing equivalent and bearing non-equivalent formations are given to illustrate these conditions. 

\end{abstract}

\begin{keyword}
Bearing rigidity; bearing equivalence; bearing-based formation; directed graph.   
\end{keyword}

\end{frontmatter}

\section{Introduction}

    
    

    

Recent years have witnessed a growing interest in bearing-based distributed control and estimation of networked multi-agent systems, such as bearing-based formation control (\cite{zhao2015bearing,  tron2016bearing, karimian2021bearing,   tang2022relaxed}) and network localization (\cite{lin2016distributed, arrigoni2018bearing}).  The application of bearing measurements in networked systems has been motivated by the advance of vision-based sensors, which  facilitate the sensing of relative directions between spatially distributed agents. This is in contrast to relative position or distance measurements, which tend to be more costly or unreliable. 

One of the key graph concepts underpinning these bearing-based applications is the bearing rigidity theory, which has been thoroughly discussed in  \cite{zhao2014TACBearing} with a focus on undirected graphs. The bearing rigidity theory has provided a powerful framework for studying bearing-based formation control of multi-agent systems, where the desired target shape is specified by constant inter-agent bearings \cite{zhao2019bearing, tang2021formation}. Compared to displacement-based (\cite{oh2015survey}) or distance-based  (\cite{sun2017distributed}) approaches, bearing-based formation control provides a natural solution to the problem of formation scale control where the shape of the formation is invariant but the inter-agent distance changes. 
One fundamental problem of bearing-based formation control that has not been completely solved is how to realize a desired bearing  formation   over \textit{directed} graphs. Most existing works on bearing-based formation or localization  assume that the underlying graph is undirected (\cite{zhao2019bearing}) or satisfies special directed graph structures 
(\cite{tang2021formation}).
Furthermore, the existing bearing-based formation control laws may also become unstable for directed graphs. 

As a key step towards solving the general problem of bearing-based formation control for directed graphs, the paper \cite{zhao2015bearing} first studied \textit{bearing persistence}   for bearing formations in directed graphs. The notion of bearing persistence follows in spirit the work of distance rigidity and persistence originally presented in  \cite{hendrickx2007directed, anderson2008rigid}. In \cite{zhao2015bearing}, however, the definition of  bearing-based persistence  involves the kernel equivalence between the bearing rigidity matrix (for undirected graphs) and the bearing Laplacian matrix (for directed graphs).   To differentiate between these notions, in this work we term the condition as \textit{bearing equivalence}. However, different to the full development of distance  persistence and its application in distance-based multi-agent control, conditions for bearing equivalence are not well understood and its characterization still  remains open.   
In this paper, following \cite{zhao2015bearing}, we aim to provide several conditions to characterize bearing equivalence in   directed graphs. These conditions involve the spectrum and null space of the associated bearing Laplacian matrix, which are separately presented for bearing-based directed graphs with or without directed cycles. 

The remaining parts of this paper are organized as follows. In Section~\ref{sec:prelim} we review key concepts and preliminaries on graph theory, bearing rigidity and useful matrix results. Section~\ref{Sec:BP_notion} introduces the notion of bearing equivalence and derives a useful formula for bearing Laplacian. Section~\ref{Sec:BP_acyclic} and Section~\ref{Sec:BP_cyclic} present several key conditions to characterize bearing equivalence for directed graphs without cycles and with cycles, respectively. Section~\ref{Sec:conclusion} concludes this paper. 
 
\section{Preliminaries} \label{sec:prelim}
\subsection{Notations}
The notations in this paper are fairly standard. 
We use  $ \mydiag(A_i)\triangleq\blkdiag\{A_1,\dots,A_n\}\in\mathbb{R}^{np\times nq}$ to denote block diagonal matrix with given $A_i\in\mathbb{R}^{p\times q}$ for $i=1,\dots,n$.
Let $\Null(\cdot)$ and $\Range(\cdot)$ be the null space and range space of a matrix, respectively, and $\dim(\cdot)$ be the dimension of a linear space.
Denote $I_d\in\R^{d\times d}$ as the identity matrix, and $\one_n \triangleq[1\,\cdots\,1]^\T$ (when the subscript of $\one$ is omitted, the vector dimension should be clear from the context).
Let $\|\cdot\|$ be the Euclidean norm of a vector or the spectral norm of a matrix, and $\otimes$ be the Kronecker product. For any nonzero vector $x\in\R^d$ ($d\ge2$), we define the projection matrix $P: \R^d\rightarrow\R^{d\times d}$ as
\begin{align} \label{eq:projection_matrix}
    P(x) \triangleq I_d - \frac{x}{\|x\|}\frac{x^\T }{\|x\|}.
\end{align}
For notational simplicity, we also denote $P_x=P(x)$.

\subsection{Preliminaries on graph theory}
Consider a directed graph with $n$ vertices and  $m$ edges, denoted by $\mathcal{G} =( \mathcal{V}, \mathcal{E})$.  The vertex set $\mathcal{V} = \{1,2,\ldots, n\}$ represents the index of $n$ agents in the group, and the edge set $\mathcal{E} \subset \mathcal{V} \times \mathcal{V}$ indicates the interconnection or neighboring relationship of $n$ agents. If $(i,j)\in\E$, we say agent $i$ can ``see'' agent $j$, which means agent $i$ can access the relative information of agent $j$. For agent $i$, its  neighbor set $\mathcal{N}_i$   is defined as $\mathcal{N}_i: = \{j \in \mathcal{V}: (i,j) \in \mathcal{E}\}$.
The incidence matrix $H = \{h_{ij}\} \in \mathbb{R}^{m \times n}$ for a directed graph $\mathcal{G}$ is defined by  $h_{ki} = 1$ if the  $k$th edge sinks at node $i$, or $h_{ki} = -1$ if the  $k$th edge  leaves  node $i$, or $h_{ki} = 0$ otherwise. Furthermore, its Laplacian matrix   $L(\mathcal{G}) = \{L_{ij}\} \in \mathbb{R}^{n \times n}$ is   defined as   $L_{ii} = |\mathcal{N}_i|$,  $L_{ij} = -1$ if $(i,j)\in\E$ and $L_{ij} = 0$ if $(i,j)\notin\E$. It is well known that  $\text{rank}(L(\mathcal{G})) = n-1$ with 0 being a simple eigenvalue of $L(\mathcal{G})$ if and only if the directed graph   $\mathcal{G}$ contains a (directed) spanning tree  (see e.g., \cite{mesbahi2010graph}).

\subsection{Preliminaries on bearing rigidity}
Bearing rigidity theory plays a key role in the analysis of bearing-based distributed formation control and network localization  problems.
In this section, we review key notions and results in the bearing rigidity theory presented in \cite{zhao2014TACBearing}. 

Given a finite collection of $n$ points $\{p_i\}_{i=1}^n$ in $\R^d$ ($n\ge2$, $d\ge2$), denote $p=[p_1^\T\,\cdots\,p_n^\T]^\T\in\mathbb{R}^{dn}$.
A \emph{formation} in $\R^d$, denoted as $\G(p)$, is a directed graph $\G=(\V,\E)$ together with $p$, where vertex $i\in\V$ in the graph is mapped to the point $p_i$.
For a formation $\G(p)$, define the \emph{edge vector} and the \emph{bearing}, respectively, as
\begin{align*}
e_{ij}\triangleq p_j-p_i, \quad g_{ij}\triangleq e_{ij}/\|e_{ij}\|, \quad \forall(i,j)\in\E.
\end{align*}
The bearing $g_{ij}$ is a unit vector.

\begin{definition}[Bearing Equivalent Formations]\label{definition_bearingEquivalence}
Two formations $\G(p)$ and $\G(p')$ are \emph{bearing equivalent} if $P_{g_{ij}}g'_{ij}=0$ for all $(i,j)\in\E$.
\end{definition}

By Definition~\ref{definition_bearingEquivalence}, bearing equivalent formations have parallel inter-neighbor bearings.
Suppose $|\E|=m$ and index all the directed edges from $1$ to $m$.
Re-express the edge vector and the bearing as $e_{k}$ and $g_{k}\triangleq {e_{k}}/{\|e_{k}\|}$, $\forall k\in\{1,\dots,m\}$.
Let $e=[e_1^\T \,\cdots\,e_m^\T ]^\T$ and $g=[g_1^\T \,\cdots\,g_m^\T ]^\T$.
Note $e$ satisfies $e=\bar{H}p$ where $\bar{H}=H\otimes I_d$ and $H$ is the incidence matrix of the graph $\mathcal{G}$.
Define the \emph{bearing function} $F_B: \R^{dn}\rightarrow\R^{dm}$ as
\begin{align*}
    \FB(p)\triangleq [g_1^\T, \cdots,g_m^\T]^\T.
\end{align*}
The bearing function describes all the bearings in the formation.
The \emph{bearing rigidity matrix} is defined as the Jacobian of the bearing function,
\begin{align}\label{eq_rigidityMatrixDefinition}
    \RB(p) \triangleq \frac{\partial \FB(p)}{\partial p}\in\R^{dm\times dn}.
\end{align}
Let $\delta p$ be a variation of $p$.
If $\RB(p)\delta p=0$, then $\delta p$ is called an \emph{infinitesimal bearing motion} of $\G(p)$.

\begin{definition}[{Infinitesimal Bearing Rigidity}]\label{definition_infinitesimalParallelRigid}
    A formation is \emph{infinitesimally bearing rigid} if all the infinitesimal bearing motions of the formation are trivial (i.e., translation and scaling of the entire formation).
\end{definition}

\begin{lemma}[\cite{zhao2014TACBearing}]\label{lemma_BearingRigidityProperty}
For any formation $\G(p)$, the bearing rigidity matrix defined in \eqref{eq_rigidityMatrixDefinition} satisfies
\begin{enumerate} 
\item $\RB(p)= \mydiag\left({P_{g_k}}/{\|e_k\|}\right)\bar{H}$;
\item $\rank(\RB)\le dn-d-1$ and $\myspan\{\one\otimes I_d, p\}\subseteq \Null(\RB)$.
\end{enumerate}
\end{lemma}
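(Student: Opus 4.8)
The plan is to establish the explicit factorization in part~(1) by a blockwise chain-rule computation, and then to read off both claims in part~(2) directly from that factorization.

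For part~(1), since $\FB$ stacks the normalized edge vectors $g_k=e_k/\|e_k\|$ and $e=\bar{H}p$, I would apply the chain rule and write $\RB = (\partial\FB/\partial e)(\partial e/\partial p)$. The second factor is simply $\bar{H}$ because $e$ depends linearly on $p$. The heart of the argument is the first factor: differentiating the normalization map $e_k\mapsto e_k/\|e_k\|$ for a single edge, I would show
\begin{align*}
\frac{\partial}{\partial e_k}\frac{e_k}{\|e_k\|}
= \frac{1}{\|e_k\|}\left(I_d-\frac{e_k e_k^\T}{\|e_k\|^2}\right)
= \frac{1}{\|e_k\|}P_{g_k}.
\end{align*}
Because distinct edges occupy disjoint blocks of $e$, the full Jacobian $\partial\FB/\partial e$ is block diagonal with these blocks, which gives $\mydiag(P_{g_k}/\|e_k\|)$ and hence the claimed formula $\RB=\mydiag(P_{g_k}/\|e_k\|)\bar{H}$.

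For the null-space inclusion in part~(2), I would verify the two generators separately using the factorization. For the translation block $\one\otimes I_d$, I would invoke the Kronecker identity $\bar{H}(\one\otimes I_d)=(H\one)\otimes I_d$ together with the fact that each row of the incidence matrix $H$ contains exactly one $+1$ and one $-1$, so $H\one=0$; this immediately yields $\RB(\one\otimes I_d)=0$. For the scaling vector $p$, I would note $\bar{H}p=e$ and then compute the $k$th block of $\RB p$ as $\frac{1}{\|e_k\|}P_{g_k}e_k$, which vanishes since $P_{g_k}e_k=e_k-g_k(g_k^\T e_k)=0$ when $g_k=e_k/\|e_k\|$. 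Together these give $\myspan\{\one\otimes I_d,p\}\subseteq\Null(\RB)$.

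The rank bound then follows by a dimension count: the columns of $\one\otimes I_d$ span a $d$-dimensional subspace, and $p$ is independent of them provided the formation is nondegenerate (not all points coincident, which is already required for the bearings to be well defined), so $\dim\Null(\RB)\ge d+1$ and therefore $\rank(\RB)\le dn-d-1$. I expect the only delicate step to be the normalization-Jacobian computation in part~(1)---ensuring the projection structure $P_{g_k}$ emerges cleanly from the quotient rule---while the nondegeneracy needed for the strict dimension count is the one hypothesis I would flag explicitly.
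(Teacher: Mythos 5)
Your proof is correct and follows the standard derivation: the paper itself states this lemma as a cited result from \cite{zhao2014TACBearing} without reproving it, and your chain-rule computation of the normalization Jacobian $\partial(e_k/\|e_k\|)/\partial e_k = P_{g_k}/\|e_k\|$ together with the verification $H\one=0$ and $P_{g_k}e_k=0$ is exactly the argument used there. Your explicit flagging of the nondegeneracy needed for $p\notin\myspan\{\one\otimes I_d\}$ (so that the null space has dimension at least $d+1$) is a reasonable and correctly identified hypothesis, which is implicit in the bearings being well defined.
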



\begin{theorem}[\cite{zhao2014TACBearing}]\label{theorem_conditionInfiParaRigid}
    For any formation $\G(p)$, the following statements are equivalent:
    \begin{enumerate} 
    \item $\G(p)$ is {infinitesimally bearing rigid};
    \item $\G(p)$ can be uniquely determined up to a translation and a scaling factor by the inter-neighbor bearings;
    \item $\rank(\RB)=dn-d-1$;
    \item $\Null(\RB)=\myspan\{\one\otimes I_d, p\}$.
    \end{enumerate}
\end{theorem}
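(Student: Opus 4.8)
The plan is to prove the equivalence by routing every statement through (4), the explicit description of the null space, while exploiting the fact that bearing equivalence is a \emph{linear} condition on the configuration. The cornerstone observation I would establish first is the identity
\begin{align*}
\Null(\RB(p)) = \{p'\in\R^{dn} : \G(p')\ \text{is bearing equivalent to}\ \G(p)\},
\end{align*}
valid up to degenerate configurations in which some adjacent points of $p'$ coincide. This follows directly from part (1) of Lemma~\ref{lemma_BearingRigidityProperty}: since $\RB(p)=\mydiag(P_{g_k}/\|e_k\|)\bar{H}$ with each $1/\|e_k\|>0$, we have $\RB(p)p'=0$ iff $P_{g_k}(p'_j-p'_i)=0$ for every edge $k=(i,j)$, which after normalizing by the positive scalar $1/\|p'_j-p'_i\|$ is exactly $P_{g_{ij}}g'_{ij}=0$, i.e. Definition~\ref{definition_bearingEquivalence}.

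Next I would dispatch (3) $\Leftrightarrow$ (4) by dimension counting. By rank--nullity, $\rank(\RB)+\dim\Null(\RB)=dn$. The space $\myspan\{\one\otimes I_d,p\}$ has dimension exactly $d+1$: the translation block $\one\otimes I_d$ contributes $d$, and $p\notin\myspan\{\one\otimes I_d\}$ because at least one edge has $p_i\neq p_j$, so $p$ is not of the form $\one\otimes v$. Lemma~\ref{lemma_BearingRigidityProperty}(2) supplies the inclusion $\myspan\{\one\otimes I_d,p\}\subseteq\Null(\RB)$. Hence (3), namely $\rank(\RB)=dn-d-1$, is equivalent to $\dim\Null(\RB)=d+1$, and since a $(d+1)$-dimensional subspace contained in another of the same dimension must coincide with it, this is equivalent to (4).

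For (1) $\Leftrightarrow$ (4) I would simply identify the two objects appearing in Definition~\ref{definition_infinitesimalParallelRigid}. The infinitesimal bearing motions are by definition the elements of $\Null(\RB)$, whereas the trivial motions, i.e. translations $\one\otimes v$ and scalings $\alpha p$, are precisely $\myspan\{\one\otimes I_d,p\}$. Thus ``all infinitesimal bearing motions are trivial'' reads $\Null(\RB)\subseteq\myspan\{\one\otimes I_d,p\}$, which together with the reverse inclusion from Lemma~\ref{lemma_BearingRigidityProperty}(2) is exactly (4); the converse is immediate.

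Finally, (2) $\Leftrightarrow$ (4) is where the linear structure pays off, and it is the step I expect to need the most care. Using the cornerstone identity, the set of formations sharing the inter-neighbor bearings of $\G(p)$ is exactly $\Null(\RB(p))$, so $\G(p)$ being ``uniquely determined up to a translation and a scaling'' means precisely that this solution set equals $\myspan\{\one\otimes I_d,p\}$, which is (4). The main obstacle here is bridging the global phrasing of (2) and the linear-algebraic phrasing of (4): I must argue that no bearing-equivalent configuration outside the span exists and carefully handle the exclusion of degenerate $p'$ (where $p'_j=p'_i$ leaves $g'_{ij}$ undefined). The essential point, distinguishing this from distance rigidity, is that bearing equivalence imposes \emph{linear} constraints, so infinitesimal and global bearing rigidity coincide and no separate finite-flex argument is required.
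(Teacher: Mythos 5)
The paper offers no proof of this theorem --- it is quoted verbatim from \cite{zhao2014TACBearing} as an imported result --- so there is no in-paper argument to compare against; what follows is an assessment of your proposal on its own terms. Your route is the standard one for this result and is essentially sound: the cornerstone identity is a correct reading of Lemma~\ref{lemma_BearingRigidityProperty}(1), the (3) $\Leftrightarrow$ (4) dimension count is complete (the only hypothesis needed is that some edge has $p_i\neq p_j$, which is automatic since the bearings $g_{ij}$ are assumed defined), and (1) $\Leftrightarrow$ (4) is indeed pure definition-unwinding.

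The one step you flag but do not actually execute is (2) $\Rightarrow$ (4), and it deserves to be written out because it is where the theorem's content lives. Given $\delta p\in\Null(\RB(p))$, set $p'=p+\varepsilon\,\delta p$. Since $P_{g_{ij}}(\delta p_j-\delta p_i)=0$, you may write $\delta p_j-\delta p_i=s_{ij}g_{ij}$ for scalars $s_{ij}$, so $e'_{ij}=\bigl(\|e_{ij}\|+\varepsilon s_{ij}\bigr)g_{ij}$; for $\varepsilon>0$ small enough every coefficient is positive, hence $p'$ is nondegenerate and has \emph{exactly} the same bearings as $p$ (not merely parallel ones). Statement (2) then forces $p+\varepsilon\,\delta p\in\myspan\{\one\otimes I_d,p\}$, hence $\delta p$ lies in that span, which is (4). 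This single perturbation argument simultaneously realizes your ``linear constraints'' remark and disposes of the degenerate-$p'$ caveat, since a small perturbation cannot collapse an edge. A last cosmetic point: $\myspan\{\one\otimes I_d,p\}$ also contains $-p$ and $0$, which are not geometric ``translations and scalings,'' but this causes no trouble in either implication (for (4) $\Rightarrow$ (2) the same-bearing hypothesis forces the scaling coefficient to be positive). With that step filled in, the proof is correct.
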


\subsection{Useful matrix results}
The following matrix results will be frequently used in this paper to characterize bearing rigidity/equivalence and stability of bearing-based formations. 
 
\begin{lemma} (Null space of matrix product, \cite{sun2017distributed}) \label{lemma_nullAB}
Consider two matrices $A \in \mathbb{R}^{m\times n}$ and $B \in \mathbb{R}^{n\times k}$ and the matrix product $C = AB$. Then there holds $\dim(\Null(C)) = \dim(\Null(B)) +  \dim(  (\Null(A)  \cap \Range(B)) $. In particular,  it holds  $\Null(C) = \Null(B)$ if and only if  $\Null(A)  \cap \Range(B)  = \{0\}$. 
\end{lemma}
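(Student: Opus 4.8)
The plan is to prove the dimension formula by studying the action of $B$ on the subspace $\Null(C)$ itself, and then to read off the ``in particular'' statement as an immediate corollary. The starting observation is that $\Null(B)\subseteq\Null(C)$: if $Bx=0$ then certainly $ABx=0$. This inclusion is what will later let me upgrade an equality of dimensions into an equality of subspaces. The second key observation is a characterization of membership in $\Null(C)$: since $C=AB$, a vector $x\in\R^k$ lies in $\Null(C)$ if and only if $Bx\in\Null(A)$; and because $Bx$ always lies in $\Range(B)$, this is equivalent to $Bx\in\Null(A)\cap\Range(B)$.

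With these in hand I would consider the restriction of the linear map $B$ to the subspace $\Null(C)$, viewed as a map from $\Null(C)$ into $\Null(A)\cap\Range(B)$; the preceding observation guarantees this codomain is correct. I then compute its kernel and image separately. The kernel consists of those $x\in\Null(C)$ with $Bx=0$, which is exactly $\Null(B)$ (using $\Null(B)\subseteq\Null(C)$, so no vectors are lost). For the image, I would show the restriction is onto $\Null(A)\cap\Range(B)$: given any $y$ in that intersection, write $y=Bx$ for some $x$ (possible since $y\in\Range(B)$), and note $ABx=Ay=0$ because $y\in\Null(A)$, so $x\in\Null(C)$ maps to $y$. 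Applying the rank--nullity theorem to this restriction then yields
\begin{align*}
\dim(\Null(C))=\dim(\Null(B))+\dim(\Null(A)\cap\Range(B)),
\end{align*}
which is the claimed identity.

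For the second assertion, the dimension formula shows $\dim(\Null(C))=\dim(\Null(B))$ if and only if $\dim(\Null(A)\cap\Range(B))=0$, i.e.\ $\Null(A)\cap\Range(B)=\{0\}$. Combined with the inclusion $\Null(B)\subseteq\Null(C)$, equality of dimensions forces $\Null(C)=\Null(B)$, and conversely, giving the stated equivalence.

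The argument is essentially routine linear algebra, so the main point requiring care is verifying that the restricted map surjects onto $\Null(A)\cap\Range(B)$ rather than merely landing inside it; this surjectivity is exactly what makes the intersection term appear in the formula, and it hinges on the fact that any element of $\Range(B)$ admits a preimage in $\R^k$ that automatically lies in $\Null(C)$ once the element is also in $\Null(A)$. Keeping the domain, codomain, and the inclusion $\Null(B)\subseteq\Null(C)$ straight throughout is the only place where a slip could occur.
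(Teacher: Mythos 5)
Your proof is correct: restricting $B$ to $\Null(C)$, identifying its kernel as $\Null(B)$ and its image as $\Null(A)\cap\Range(B)$ (with the surjectivity step you rightly flag as the crux), and applying rank--nullity is the standard argument for this identity, and the paper itself defers the proof to the cited reference rather than giving one that could differ. The ``in particular'' part is handled properly by combining the dimension count with the inclusion $\Null(B)\subseteq\Null(C)$, so nothing is missing.
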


\begin{lemma} \label{lemma:triangular_matrix} (Block  triangular   matrix, \cite{harville2008matrix})
Consider a real-valued block upper triangular square  matrix $A$, with the $i$-th diagonal square block denoted by $A_{ii}$. Then the eigenvalues of $A$ are the union of the set of eigenvalues of each diagonal  block $A_{ii}$; i.e., it holds $\lambda(A) = \bigcup_{i=1}^n \lambda (A_{ii})$. 
\end{lemma}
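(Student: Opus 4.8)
The plan is to show that the characteristic polynomial of $A$ factors into the product of the characteristic polynomials of its diagonal blocks; once this factorization is established the claim on eigenvalues is immediate, since the eigenvalues of $A$ are precisely the roots of $\det(A-\lambda I)$, and the roots of a product of polynomials are exactly the union of the roots of the factors.

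First I would observe that for any scalar $\lambda$ the matrix $A-\lambda I$ is again block upper triangular, with the same block partition as $A$ and with $i$-th diagonal block equal to $A_{ii}-\lambda I_{d_i}$, where $d_i$ is the size of $A_{ii}$. This reduces the problem to the purely determinantal identity
\begin{align*}
\det(A) = \prod_{i=1}^{n}\det(A_{ii}),
\end{align*}
valid for any block upper triangular matrix $A$ with diagonal blocks $A_{ii}$; applying it to $A-\lambda I$ in place of $A$ then yields $\det(A-\lambda I)=\prod_{i=1}^n \det(A_{ii}-\lambda I_{d_i})$, which is the desired factorization of the characteristic polynomial.

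To prove the determinant identity I would induct on the number of diagonal blocks $n$. The base case $n=1$ is trivial, so it suffices to treat the two-block case and then peel off one block at a time. For the two-block case, writing $A = \left[\begin{smallmatrix} A_{11} & A_{12} \\ 0 & A_{22} \end{smallmatrix}\right]$, the cleanest route when $A_{11}$ is invertible is the factorization $A = \left[\begin{smallmatrix} A_{11} & 0 \\ 0 & I \end{smallmatrix}\right]\left[\begin{smallmatrix} I & A_{11}^{-1}A_{12} \\ 0 & A_{22} \end{smallmatrix}\right]$, whose two factors have determinants $\det(A_{11})$ and $\det(A_{22})$ respectively. The invertibility assumption is then removed by continuity: the identity $\det(A)=\det(A_{11})\det(A_{22})$ is a polynomial identity in the entries of $A_{11}$, and since invertible matrices are dense, an identity holding on a dense set extends to all $A_{11}$. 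Alternatively, one can bypass invertibility entirely through the Leibniz formula, noting that the zero block forces every non-vanishing permutation to restrict to a permutation of the first index block together with a permutation of the second.

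The main obstacle is this two-block determinant identity, and specifically the need to handle singular diagonal blocks cleanly; the factorization argument is transparent but only valid for invertible $A_{11}$, so the care lies in the density/continuity step (or, equivalently, in the permutation bookkeeping of the Leibniz-formula argument). Everything after that — the inductive peeling of blocks and the passage from the factored characteristic polynomial to the statement $\lambda(A)=\bigcup_{i=1}^n\lambda(A_{ii})$ — is routine.
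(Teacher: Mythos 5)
The paper does not actually prove this lemma: it is quoted from \cite{harville2008matrix} as a preliminary, and the authors explicitly defer the proofs of all lemmas in that subsection to a journal version, so there is no in-paper argument to compare against. Your proof is the standard one and is correct: reducing to the determinant identity $\det(A)=\prod_i\det(A_{ii})$, applying it to $A-\lambda I$, and handling the two-block case by induction. One small caution on the factorization route: the second factor $\left[\begin{smallmatrix} I & A_{11}^{-1}A_{12} \\ 0 & A_{22}\end{smallmatrix}\right]$ is itself block upper triangular, so asserting its determinant equals $\det(A_{22})$ quietly uses a special case of the identity you are proving; this special case is easy (repeated cofactor expansion along the columns of the identity block), but it should be stated rather than assumed, or you can simply lean on the Leibniz-formula argument you already sketch, which avoids both the invertibility issue and this circularity in one stroke. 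With that noted, the passage from the factored characteristic polynomial to $\lambda(A)=\bigcup_{i=1}^n\lambda(A_{ii})$ is immediate, and in fact your argument gives the stronger multiset (multiplicity-counting) version of the statement.
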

 
\begin{lemma}\label{lemma:projection} (Properties of projection matrices, \cite{bernstein2018scalar})
For a projection matrix $P_x$ defined in \eqref{eq:projection_matrix}, it holds $P_x^\T =P_x$, $P_x P_x=P_x$, and $P_x$ is positive semi-definite.
Moreover, $\Null(P_x)=\myspan\{x\}$ and the eigenvalues of $P_x$ are $\{0,1^{(d-1)}\}$.
Any two nonzero vectors $x,y\in\R^d$ are parallel if and only if $P_xy=0$ (or equivalently $P_yx=0$).
\end{lemma}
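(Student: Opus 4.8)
The plan is to first fix the shorthand $\hat{x}\triangleq x/\|x\|$, so that $P_x=I_d-\hat{x}\hat{x}^\T$ with $\hat{x}^\T\hat{x}=1$. With this notation the three algebraic properties are immediate. Symmetry follows from $(\hat{x}\hat{x}^\T)^\T=\hat{x}\hat{x}^\T$, whence $P_x^\T=P_x$. Idempotence is a one-line expansion,
\begin{align*}
P_xP_x=(I_d-\hat{x}\hat{x}^\T)(I_d-\hat{x}\hat{x}^\T)=I_d-2\hat{x}\hat{x}^\T+\hat{x}(\hat{x}^\T\hat{x})\hat{x}^\T=P_x,
\end{align*}
where the crucial cancellation uses $\hat{x}^\T\hat{x}=1$. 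Positive semi-definiteness then comes for free from symmetry and idempotence: since $P_x=P_x^\T P_x$, for any $v\in\R^d$ we have $v^\T P_x v=v^\T P_x^\T P_x v=\|P_xv\|^2\ge0$.

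Next I would pin down the null space. A direct computation gives $P_x x=x-\hat{x}(\hat{x}^\T x)=x-\hat{x}\|x\|=0$, so $\myspan\{x\}\subseteq\Null(P_x)$. Conversely, if $P_xv=0$ then $v=\hat{x}(\hat{x}^\T v)=(\hat{x}^\T v)\hat{x}$, which lies in $\myspan\{x\}$; this yields $\Null(P_x)=\myspan\{x\}$. For the spectrum, I would use that a symmetric idempotent matrix has all eigenvalues in $\{0,1\}$ (since $\lambda=\lambda^2$ on eigenvectors). The zero eigenspace is exactly $\Null(P_x)$, which is one-dimensional, so $0$ is a simple eigenvalue. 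To count the remaining multiplicities I would invoke the trace, $\tr(P_x)=\tr(I_d)-\tr(\hat{x}\hat{x}^\T)=d-\hat{x}^\T\hat{x}=d-1$; since the eigenvalues sum to the trace and each is $0$ or $1$, exactly $d-1$ of them equal $1$, giving the spectrum $\{0,1^{(d-1)}\}$.

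Finally, the parallel characterization reduces to the null-space statement just established. Two nonzero vectors $x,y\in\R^d$ are parallel precisely when $y\in\myspan\{x\}$, and by $\Null(P_x)=\myspan\{x\}$ this is equivalent to $P_xy=0$. Because being parallel is a symmetric relation, interchanging the roles of $x$ and $y$ shows it is likewise equivalent to $P_yx=0$.

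The computations here are entirely routine; the only step that is not a pure one-line verification is the eigenvalue multiplicity count, and even that is settled cleanly by combining the $\{0,1\}$ dichotomy for symmetric idempotents with the trace identity $\tr(P_x)=d-1$. I therefore expect no substantive obstacle in this proof.
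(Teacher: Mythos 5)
Your proof is correct and complete; every step (symmetry, idempotence, positive semi-definiteness via $v^\T P_x v=\|P_xv\|^2$, the null-space identification, the trace-based eigenvalue count, and the parallelism characterization) is the standard argument for orthogonal projectors. The paper itself provides no proof of this lemma --- it cites \cite{bernstein2018scalar} and defers all proofs of the matrix lemmas to a journal version --- so there is no alternative route to compare against; your write-up would serve as a valid substitute.
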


\begin{lemma} (Sum of projection matrices) \label{lemma:sum_projection}
Consider a set of projection matrix,  $P_i(x_i)$ defined in \eqref{eq:projection_matrix} associated  with nonzero vector $x_i\in\R^d$ ($d\ge2$). Then the following holds:
\begin{enumerate} 
    \item For the matrix sum $P_i(x_i) + P_j(x_j)$, if the vectors $x_i$ and $x_j$ are parallel (i.e., linearly dependent), then $\Null(P_i(x_i) + P_j(x_j))=\myspan\{x_i\}$ and $P_i(x_i) + P_j(x_j)$ is positive semi-definite. 
    \item Otherwise, if the vectors $x_i$ and $x_j$ are linearly independent, then $\Null(P_i(x_i) + P_j(x_j))=\{0\}$ and the matrix sum $P_i(x_i) + P_j(x_j)$ is positive definite. 
    \item For the matrix sum $P_i(x_i) + P_j(x_j) + \cdots + P_k(x_k)$, if \textbf{all} vectors $x_i, x_j, \cdots, x_k$ are parallel, then $$\Null(P_i(x_i) + P_j(x_j) + \cdots + P_k(x_k))=\myspan\{x_i\}.$$
    \item The matrix sum $P_i(x_i) + P_j(x_j) + \cdots + P_k(x_k)$ is positive definite if there exist at least two vectors in the list that are non-parallel. 
\end{enumerate}

 \end{lemma}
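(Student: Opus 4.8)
The plan is to reduce all four claims to a single structural fact: for the symmetric positive semi-definite projection matrices $P_{x_s}$, the null space of their sum equals the intersection of the individual null spaces, and the sum is positive definite exactly when that intersection is trivial. Combined with $\Null(P_{x_s})=\myspan\{x_s\}$ from Lemma~\ref{lemma:projection}, each part then becomes a routine computation of an intersection of spans, so no part needs to be argued from scratch.

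First I would establish the structural fact. Let $S=\sum_s P_{x_s}$. Each summand is positive semi-definite, so $S$ is as well, and the inclusion $\bigcap_s\Null(P_{x_s})\subseteq\Null(S)$ is immediate. For the reverse inclusion, suppose $Sv=0$; then $0=v^\T S v=\sum_s v^\T P_{x_s} v$. Because each $P_{x_s}$ is idempotent and symmetric by Lemma~\ref{lemma:projection}, we have $v^\T P_{x_s} v=v^\T P_{x_s}P_{x_s} v=\|P_{x_s} v\|^2\ge 0$, so every term in the sum must vanish. Hence $P_{x_s}v=0$ for all $s$, i.e. $v\in\bigcap_s\Null(P_{x_s})$. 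Since $S$ is positive semi-definite, it is positive definite if and only if $\Null(S)=\{0\}$.

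Then I would dispatch the four parts by computing the relevant intersection of the null spaces $\Null(P_{x_s})=\myspan\{x_s\}$. In Parts~1 and~3 all the vectors are parallel, so the spans $\myspan\{x_i\},\myspan\{x_j\},\dots$ coincide and their intersection is $\myspan\{x_i\}$, while positive semi-definiteness is inherited directly from the summands. In Part~2 the vectors $x_i,x_j$ are linearly independent, so $\myspan\{x_i\}\cap\myspan\{x_j\}=\{0\}$, the null space of the sum is trivial, and hence the sum is positive definite. For Part~4, if two of the vectors, say $x_i$ and $x_j$, are non-parallel, then already $\Null(P_{x_i})\cap\Null(P_{x_j})=\{0\}$; since the full intersection is contained in this, it too is $\{0\}$, and positive definiteness follows.

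The main thing to get right is the identity $v^\T P_{x_s} v=\|P_{x_s}v\|^2$, which converts the vanishing of the quadratic form of the \emph{sum} into the simultaneous vanishing of each individual projection $P_{x_s}v$. This is the one genuinely load-bearing step, and it is where the idempotence and symmetry of projection matrices---rather than mere positive semi-definiteness---do the real work; everything after it is bookkeeping with spans and an appeal to Lemma~\ref{lemma:projection}.
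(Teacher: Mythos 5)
Your proof is correct and complete: the reduction of all four parts to the single fact that for symmetric positive semi-definite summands $\Null\bigl(\sum_s P_{x_s}\bigr)=\bigcap_s\Null(P_{x_s})$, established via $v^\T P_{x_s}v=\|P_{x_s}v\|^2$, is exactly the right load-bearing step, and the remaining span computations are all handled properly. Note that the paper itself omits the proof (deferring it to a journal version), so there is no in-paper argument to compare against; your argument is the standard one that such a proof would be expected to take, and nothing is missing.
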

 The proofs of the above lemmas will be provided in the journal version of this paper. 

\section{Notion of bearing equivalence} \label{Sec:BP_notion}

The development of   bearing   equivalence  is motivated by the bearing-based formation control in directed graphs. The bearing-based formation control problem specifies a set of desired inter-agent bearings, $g_{ij}^*$, and the objective is to design a distributed control using only relative position measurements from neighboring agents to drive the agents to the formation specified by the $g_{ij}^*$'s, i.e., $\lim_{t\to\infty} g_{ij}(t) = g_{ij}^*$.  \cite{zhao2015bearing} proposed the following control scheme,
\begin{align}\label{eq_controlLaw}
\dot{p}_i(t)= - \sum_{j\in\N_i} P_{g_{ij}^*}\left(p_i(t)-p_j(t)\right), \quad i\in\V,
\end{align}
where $P_{g_{ij}^*}=I_d-(g_{ij}^*) (g_{ij}^*)^\T$. The stability and convergence of the above distributed and linear formation control system depends on the spectrum and null space of the \textbf{bearing Laplacian} $\LB\in\R^{dn\times dn}$, 
whose $ij$th block submatrix  is defined as below
\begin{align*}
\left\{
  \begin{array}{ll}
      [\LB]_{ij}=0_{d\times d}, & i\ne j, (i,j)\notin\E, \\
      {[\LB]_{ij}}=-P_{g_{ij}^*}, & i\ne j, (i,j)\in\E, \\ 
      {[\LB]_{ii}}=\sum_{j\in\N_i}P_{g_{ij}^*}, & i\in\V. \\
  \end{array}
\right.
\end{align*}
In this way, the compact matrix expression of the control law \eqref{eq_controlLaw} is
\begin{align}\label{eq_controlLaw_matrix}
\dot{p}(t)=-\LB p(t).
\end{align}
It becomes apparent that  the matrix $\LB$ can also be interpreted as a \textit{matrix-weighted} graph Laplacian with orthogonal projection matrix weights.   The bearing Laplacian $\LB$, together with its spectrum and null space, are jointly determined by the topological structure of the underlying graph and bearing information of the formation. 

For an undirected graph, it is easy to show that
\begin{align} \label{eq:LB_undirected}
    \LB= \bar{H}^\T  \mydiag{(P_{g_k})}\bar{H} 
\end{align}
which immediately leads to that, for   an undirected formation $\G(p)$, the bearing Laplacian $\LB$ is symmetric positive semi-definite and satisfies
$
\Null(\LB)=\Null(\RB)
$.

However, when the graph is directed, the bearing Laplacian is not symmetric and the above \textit{kernel equivalence} relationship would not hold in  general. The notion of   bearing equivalence is defined first in \cite{zhao2015bearing} (where it was termed `bearing persistence') based on the kernel equivalence condition as $\Null(\RB)=\Null(\LB)$.  In this paper, we shall introduce a strengthened definition of bearing equivalence as below.  
\begin{definition}[Bearing Equivalence]\label{definition_bearingequivalence}
A directed formation $\G(p)$ is \emph{bearing kernel equivalent} (in short, bearing equivalent) if $\Null(\RB)=\Null(\LB) = \myspan\{\one\otimes I_d, p\}$. 
\end{definition}
 
The problem of characterizing favourable properties of bearing Laplacian $\LB$ and bearing equivalence is motivated by the bearing-based formation system \eqref{eq_controlLaw} in directed graphs. In particular, 
\begin{itemize}
    \item The spectrum of $\LB$ determines the stability properties of the formation system \eqref{eq_controlLaw}. Preferably, we aim to find conditions to guarantee that $\LB$ has all eigenvalues with non-negative real parts such that the formation system  \eqref{eq_controlLaw} is stable. 
    \item The null space of $\LB$ determines the converged formation shape of the system \eqref{eq_controlLaw}.  Preferably, we aim to find conditions to guarantee  $\Null(\LB) = \myspan\{\one\otimes I_d, p\}$ such that the converged formation shape is bearing  equivalent to the target formation.
\end{itemize}

Graph conditions to guarantee bearing equivalence for a bearing formation still remain open, though some partial solutions were presented in \cite{zhao2015bearing}.  In this paper, we will revisit these conditions from \cite{zhao2015bearing} and present more (necessary and/or sufficient) conditions to characterize bearing equivalence. In particular, we will give several graph topological conditions to characterize  
the spectrum and null space of $\LB$, which underpin certain key requirements to ensure the formation convergence by the bearing control law \eqref{eq_controlLaw}.

\subsection{A useful formula for $\LB$}
As a counterpart of the matrix expression \eqref{eq:LB_undirected} of bearing Laplacian for undirected graphs, we first derive the following expression of bearing Laplacian for directed graphs. 
For a bearing formation with a directed graph $\mathcal{G}$, the associated bearing Laplacian can be expressed by 
\begin{align}\label{eq:formula_LB}
    \LB = \bar J^\T  \mydiag{(P_{g_k})}\bar{H}  
\end{align}
where $\bar{J}=J\otimes I_d$ and the matrix $J$ is obtained by replacing the `$-1$' entry of the incidence matrix $H$ by ` $0$' from    the directed graph. 
 
 
The bearing Laplacian formula of \eqref{eq:formula_LB} follows from the formula of the conventional Laplacian matrix $L$ for a directed graph: $L =  J^\T H$. By augmenting the Kronecker product and the matrix weight (in terms of the projection matrix $P_{g_k}$ associated to each directed edge), one can obtain  \eqref{eq:formula_LB}.

The formula of \eqref{eq:formula_LB} gives the following set inclusion of null spaces (Theorem 4 of \cite{zhao2015bearing}).
\begin{lemma}\label{result_nullLBNullRB}
For a directed formation $\G(p)$, the bearing Laplacian $\LB$ satisfies
\begin{align} \label{eq:null_space_LB_RB}
\myspan\{\one\otimes I_d, p\}\subseteq\Null(\RB)\subseteq \Null(\LB).
\end{align}
\end{lemma}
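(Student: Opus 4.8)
The statement bundles two set inclusions, and the left one, $\myspan\{\one\otimes I_d, p\}\subseteq\Null(\RB)$, comes for free: it is precisely the second claim of Lemma~\ref{lemma_BearingRigidityProperty}. So my plan is to cite that lemma for the left inclusion and spend the actual argument on the right inclusion $\Null(\RB)\subseteq\Null(\LB)$.

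For the right inclusion the idea is to line up the two factorizations already in hand, namely $\RB=\mydiag(P_{g_k}/\|e_k\|)\bar H$ from Lemma~\ref{lemma_BearingRigidityProperty} and $\LB=\bar J^\T\mydiag(P_{g_k})\bar H$ from \eqref{eq:formula_LB}, and to notice that the two matrices share the common right factor $M:=\mydiag(P_{g_k})\bar H$. Concretely, writing $\Lambda:=\mydiag\big((1/\|e_k\|)I_d\big)$, which is block diagonal with strictly positive diagonal entries because every edge length $\|e_k\|$ is nonzero, one has $\mydiag(P_{g_k}/\|e_k\|)=\Lambda\,\mydiag(P_{g_k})$ and hence $\RB=\Lambda M$.

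The argument then proceeds in two short steps. First I would show $\Null(\RB)=\Null(M)$: since $\Lambda$ is invertible, left multiplication by it preserves the kernel, so $\Lambda M x=0$ if and only if $M x=0$. Second, I would use the trivial inclusion $\Null(M)\subseteq\Null(\bar J^\T M)$, valid because any $x$ with $M x=0$ satisfies $\LB x=\bar J^\T M x=0$. Chaining the two gives $\Null(\RB)=\Null(M)\subseteq\Null(\LB)$, completing the proof.

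I do not expect a serious obstacle; once the factorizations are aligned the proof is essentially bookkeeping. The one point deserving a word of care is the equality $\Null(\RB)=\Null(M)$, where one must record that the scalar weights $1/\|e_k\|$ are all nonzero (equivalently that $\Lambda$ is invertible) so that dropping them neither enlarges nor shrinks the kernel. If one prefers to avoid invertibility of $\Lambda$, this step can instead be made block by block, noting that the $k$-th $d$-dimensional block of $\RB x$ is $(1/\|e_k\|)P_{g_k}(\bar H x)_k$, which vanishes exactly when $P_{g_k}(\bar H x)_k$ does; both routes are elementary. One could alternatively invoke Lemma~\ref{lemma_nullAB} on the product $\LB=\bar J^\T M$ to compare kernel dimensions, but for a mere inclusion of null spaces the direct argument above is shorter and self-contained.
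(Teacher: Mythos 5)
Your proposal is correct and follows essentially the same route as the paper: the paper's $\tilde{R}_B=\mydiag(\|e_k\|)\RB$ is exactly your common factor $M=\mydiag(P_{g_k})\bar{H}$, the invertibility of the diagonal scaling gives $\Null(\RB)=\Null(\tilde{R}_B)$, and the inclusion $\Null(\tilde{R}_B)\subseteq\Null(\bar{J}^\T\tilde{R}_B)=\Null(\LB)$ finishes the argument just as you describe. No gaps.
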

\begin{proof}
This lemma is proved in \cite{zhao2015bearing}, and here we give a shorter proof based on the formula~\eqref{eq:formula_LB}. First we define $\tilde{R}_B=\mydiag(\|e_k\|)\RB$ where $\RB$ is the bearing rigidity matrix. Then it holds $\Null(\tilde{R}_B)=\Null(\RB)$ and therefore $\myspan\{\one\otimes I_d, p\}\subseteq  \Null(\tilde{R}_B)$. Note that $\LB = \bar J^\T \tilde{R}_B$ and one has $\Null(\tilde{R}_B)\subseteq \Null(\LB)$, which concludes the set inclusion of \eqref{eq:null_space_LB_RB}. 
\end{proof}

We remark that, by the property of projection matrix in Lemma~\ref{lemma:projection}, an alternative   formula for $\LB$ is given as below
\begin{align}\label{eq:formula}
    \LB = \bar J^\T  \mydiag{(P_{g_k})}\bar{H} =  \underbrace{\bar{J}^\T \mydiag{(P_{g_k}^\T )}}_{\tilde{J}_B^\T }\underbrace{\mydiag{(P_{g_k})}\bar{H}}_{\tilde{R}_B}.
\end{align}
Based on this formula \eqref{eq:formula}, we give a necessary and sufficient condition to guarantee bearing equivalence. 
\begin{theorem}\label{theorem_NULLLB=1andp}
For a directed formation $\G(p)$, the equality $$\Null(\LB)=\myspan\{\one\otimes I_d, p\}$$ holds if and only if the following two conditions are both satisfied
\begin{enumerate} \renewcommand{\theenumi}{\Roman{enumi}}
    \item $\Null(\RB)=\myspan\{\one\otimes I_d, p\}$ (i.e., the formation is   infinitesimally bearing rigid), and 
    \item $\Null({\tilde{J}_B^\T}) \cap \Range ({\tilde{R}_B}) = \{0\}$. 
\end{enumerate}
\end{theorem}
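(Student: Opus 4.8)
The plan is to exploit the factorization $\LB = \tilde{J}_B^\T \tilde{R}_B$ recorded in \eqref{eq:formula} and reduce the whole statement to the null-space-of-a-product result of Lemma~\ref{lemma_nullAB}, using the nested inclusions of Lemma~\ref{result_nullLBNullRB} to bridge between $\Null(\RB)$ and the span $\myspan\{\one\otimes I_d, p\}$. The strategy is to first establish the single equivalence ``condition (II) $\Leftrightarrow$ kernel equivalence $\Null(\LB)=\Null(\RB)$'' and then read off both directions of the theorem essentially for free.

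First I would note the elementary fact $\Null(\tilde{R}_B)=\Null(\RB)$, which already appears in the proof of Lemma~\ref{result_nullLBNullRB}: by Lemma~\ref{lemma_BearingRigidityProperty} one has $\tilde{R}_B=\mydiag(P_{g_k})\bar{H}=\mydiag(\|e_k\|)\RB$, and since every edge length is nonzero the block-diagonal scaling $\mydiag(\|e_k\|)$ is invertible and hence leaves the null space unchanged. Next I would apply Lemma~\ref{lemma_nullAB} to the product $\LB=\tilde{J}_B^\T\tilde{R}_B$ with $A=\tilde{J}_B^\T$ and $B=\tilde{R}_B$. The ``in particular'' clause of that lemma then gives exactly
\[
\Null(\LB)=\Null(\tilde{R}_B)=\Null(\RB)\quad\Longleftrightarrow\quad \Null(\tilde{J}_B^\T)\cap\Range(\tilde{R}_B)=\{0\},
\]
so condition (II) is precisely equivalent to the kernel equivalence $\Null(\LB)=\Null(\RB)$.

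With this equivalence in hand both implications are short. For sufficiency, assuming (I) and (II): (II) yields $\Null(\LB)=\Null(\RB)$, and (I) identifies $\Null(\RB)$ with $\myspan\{\one\otimes I_d, p\}$, so the desired equality follows. For necessity, assuming $\Null(\LB)=\myspan\{\one\otimes I_d, p\}$, I would invoke the chain $\myspan\{\one\otimes I_d, p\}\subseteq\Null(\RB)\subseteq\Null(\LB)$ from Lemma~\ref{result_nullLBNullRB}: since the two outer spaces now coincide, every inclusion collapses to an equality, giving $\Null(\RB)=\myspan\{\one\otimes I_d, p\}$ (condition (I)) and simultaneously $\Null(\LB)=\Null(\RB)$, whence (II) by the displayed equivalence.

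I do not expect a serious obstacle: once the factorization \eqref{eq:formula} is recognized, the theorem is essentially a corollary of Lemma~\ref{lemma_nullAB}. The only step needing care is the necessity direction, where one must lean on the \emph{already-established} sandwich inclusions of Lemma~\ref{result_nullLBNullRB} to force $\Null(\RB)=\myspan\{\one\otimes I_d, p\}$, rather than attempting to prove infinitesimal bearing rigidity (condition (I)) by an independent argument.
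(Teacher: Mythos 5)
Your proposal is correct and follows essentially the same route as the paper: factor $\LB=\tilde{J}_B^\T\tilde{R}_B$ via \eqref{eq:formula}, apply Lemma~\ref{lemma_nullAB} to identify condition (II) with the kernel equivalence $\Null(\LB)=\Null(\tilde{R}_B)=\Null(\RB)$, and combine with the characterization of $\Null(\RB)$ in condition (I). Your explicit use of the sandwich inclusions of Lemma~\ref{result_nullLBNullRB} to handle the necessity direction is a detail the paper leaves implicit, but it is the right way to fill that gap.
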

\begin{proof}
This necessary and sufficient condition follows from the condition of infinitesimally bearing rigidity in  Theorem~\ref{theorem_conditionInfiParaRigid} and the null space property of $\LB = \tilde{J}_B^\T \tilde{R}_B$ as in \eqref{eq:formula} by applying Lemma~\ref{lemma_nullAB}. 
\end{proof}
It is not clear at this stage how to use the second condition, or what it means. In the following sections, we will derive more concrete conditions to characterize the spectrum and null space of $\LB$.


\section{Bearing equivalence in\\ directed acyclic graphs} \label{Sec:BP_acyclic}
In this section, we focus on directed acyclic graphs and present several conditions on bearing equivalence. 
\subsection{Spectrum of bearing Laplacian}
The first result characterizes the spectrum property of $\LB$ for directed acyclic graphs. 

\begin{proposition}
For directed acyclic graphs, the eigenvalues of the
bearing Laplacian $\LB$ are \textbf{real and  nonnegative}. 
\end{proposition}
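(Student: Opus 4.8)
The plan is to exploit the defining block structure of $\LB$ together with the acyclicity of the graph to show that $\LB$ is similar to a block triangular matrix, and then to invoke Lemma~\ref{lemma:triangular_matrix} to reduce the spectral question to the diagonal blocks.

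First I would use the fact that every directed acyclic graph admits a topological ordering of its vertices: there is a relabeling $\sigma$ of $\V$ such that $(i,j)\in\E$ implies $\sigma(i)<\sigma(j)$. Applying the corresponding symmetric permutation to $\LB$ — equivalently, a similarity transformation by a block permutation matrix $\Pi\otimes I_d$, which preserves the spectrum — the off-diagonal block $[\LB]_{ij}=-P_{g_{ij}^*}$ is nonzero only for $(i,j)\in\E$, hence only for $\sigma(i)<\sigma(j)$. Thus, in the new ordering, all nonzero off-diagonal blocks lie strictly above the block diagonal, so the permuted $\LB$ is block upper triangular with diagonal blocks $[\LB]_{ii}=\sum_{j\in\N_i}P_{g_{ij}^*}$ (with the convention that the sum is the zero block whenever $i$ is a sink, i.e.\ $\N_i=\emptyset$).

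Next, by Lemma~\ref{lemma:triangular_matrix} the spectrum of $\LB$ is the union of the spectra of these diagonal blocks, so it remains only to show that each diagonal block has real nonnegative eigenvalues. This follows from Lemma~\ref{lemma:projection}: each $P_{g_{ij}^*}$ is symmetric and positive semi-definite, and a finite sum of symmetric positive semi-definite matrices is again symmetric positive semi-definite. Hence every diagonal block $[\LB]_{ii}$ is symmetric positive semi-definite, so its eigenvalues are real and nonnegative, while the zero block associated with a sink contributes only the eigenvalue $0$. Taking the union over all vertices yields the claim.

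The conceptual crux — and the only step requiring care — is the reduction to block triangular form: one must verify that the acyclic structure is precisely what places every nonzero off-diagonal block on one side of the diagonal, and that permuting rows and columns by the \emph{same} topological ordering is a genuine similarity transformation (so that the eigenvalues are left unchanged). Everything downstream is immediate from the cited lemmas, so I do not anticipate any real difficulty beyond the bookkeeping associated with the block permutation.
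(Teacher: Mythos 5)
Your proposal is correct and follows essentially the same route as the paper: permute the vertices (via a topological ordering) to put $\LB$ into block upper triangular form, invoke Lemma~\ref{lemma:triangular_matrix} to reduce to the diagonal blocks, and observe that each diagonal block is a zero block or a sum of projection matrices, hence symmetric positive semi-definite with real nonnegative eigenvalues. Your write-up is in fact slightly more explicit than the paper's, since you spell out the topological ordering and the similarity transformation that the paper only asserts.
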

\begin{proof}  
For directed acyclic graphs, the bearing Laplacian can always be reconstructed (with permutation of vertices and edges) as a block triangular matrix, while the $i$-th diagonal block consists of a projection matrix $P_{g^*_{ij}}$ (if the associated vertex $i$ has only one out-going edge $(i, j)$), or a sum of projection matrix $\sum_{j \in \mathcal{N}_i}P_{g^*_{ij}}$ (if the associated vertex $i$ has multiple out-going edges $(i, j)$), or is a zero block (if the associated vertex $i$ has no out-going edge).

According to Lemma~\ref{lemma:triangular_matrix}, the set of eigenvalues of a block triangular matrix is the union of eigenvalues of each diagonal block. In this bearing Laplacian,  each diagonal block is either a zero block or a positive semi-definite matrix (as a single projection or a sum of projection matrices). Therefore, according to Lemma~\ref{lemma:sum_projection}, the  eigenvalues of the
bearing Laplacian for directed acyclic graphs   are  real and  nonnegative. 
\end{proof}
\subsection{Conditions for bearing equivalence}
The following conditions are presented to characterize the null space property of $\LB$ for directed acyclic graphs.  Note that any directed acyclic graph contains a vertex with zero out-going edge, which is often termed as the ``leader agent" in formation control.  
\begin{proposition} \label{prop:non_equivalent_condition}
For a  bearing formation modeled by a directed acyclic graphs, any of the following conditions will result in  $\myspan\{\one\otimes I_d, p\} \subset  \Null(\LB)$, leading to a bearing non-equivalent formation.  
\begin{enumerate} \renewcommand{\theenumi}{\Roman{enumi}}
    \item There are at least two vertices that have zero out-going edge;
    \item There are at   least two vertices  with only one out-going edge;
    \item There are at least two vertices with collinear out-going edges. 
\end{enumerate}
 
\end{proposition}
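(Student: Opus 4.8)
The plan is to bound $\rank(\LB)$ from above by exploiting the block structure of the \emph{range} of $\LB$, and thereby bound its nullity from below by $d+2$. Since $\myspan\{\one\otimes I_d,p\}$ is the span of $d+1$ vectors and is contained in $\Null(\LB)$ by Lemma~\ref{result_nullLBNullRB}, such a nullity bound immediately forces the strict inclusion and hence bearing non-equivalence.

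First I would establish a per-vertex range restriction. For any $\delta=[\delta_1^\T\,\cdots\,\delta_n^\T]^\T\in\R^{dn}$, the $i$-th block of $\LB\delta$ equals $\sum_{j\in\N_i}P_{g_{ij}^*}(\delta_i-\delta_j)$. Each summand lies in $\Range(P_{g_{ij}^*})=\myspan\{g_{ij}^*\}^\perp$ (Lemma~\ref{lemma:projection} and symmetry of $P_{g_{ij}^*}$), so the $i$-th block is confined to a subspace $\mathcal{R}_i\subseteq\R^d$ determined solely by the out-going edges of vertex $i$: if $i$ is a sink ($\N_i=\emptyset$) then $\mathcal{R}_i=\{0\}$, of codimension $d$; if $i$ has a single out-going edge, or several out-going edges whose bearings are all collinear with a common direction $g_i$, then every summand lies in the single hyperplane $\myspan\{g_i\}^\perp$, so $\mathcal{R}_i\subseteq\myspan\{g_i\}^\perp$ has codimension at least $1$; otherwise we use the trivial codimension $0$. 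Consequently $\Range(\LB)\subseteq\mathcal{R}_1\times\cdots\times\mathcal{R}_n$, which yields
\begin{align*}
\rank(\LB)\le\sum_{i=1}^{n}\dim(\mathcal{R}_i)=dn-c,
\end{align*}
where $c\triangleq\sum_{i=1}^{n}\mathrm{codim}(\mathcal{R}_i)\ge d\,n_0+n_1$, with $n_0$ the number of sinks and $n_1$ the number of vertices having a single or collinear out-going edges.

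Next I would lower-bound the deficiency $c$ in each case, using that every directed acyclic graph has at least one sink (noted just before the statement). Under Condition~I there are at least two sinks, so $c\ge 2d\ge d+2$ since $d\ge2$. Under Conditions~II and~III there are at least two codimension-$1$ vertices and, separately, at least one sink; these are mutually distinct because a sink has no out-going edge whereas the Condition~II/III vertices do, so their codimensions add to give $c\ge d+2$. In every case $\rank(\LB)\le dn-(d+2)$, hence $\dim(\Null(\LB))=dn-\rank(\LB)\ge d+2$. Since $\dim\myspan\{\one\otimes I_d,p\}\le d+1<d+2\le\dim(\Null(\LB))$, and $\myspan\{\one\otimes I_d,p\}\subseteq\Null(\LB)$ by Lemma~\ref{result_nullLBNullRB}, we obtain the strict inclusion $\myspan\{\one\otimes I_d,p\}\subset\Null(\LB)$, i.e.\ a bearing non-equivalent formation.

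The main obstacle is the range-restriction step for the collinear case: one must verify that when all out-going bearings at a vertex share a common direction $g_i$, each projection term $P_{g_{ij}^*}(\delta_i-\delta_j)$ lands in the \emph{same} hyperplane $\myspan\{g_i\}^\perp$, so that their sum does too and the codimension is genuinely at least $1$ rather than $0$. The remaining care is bookkeeping—ensuring the counted deficient vertices are mutually distinct and distinct from the sink so that codimensions add—which is immediate since sinks have no out-going edge. Note that acyclicity enters only through the guaranteed existence of a sink, which is precisely what makes Conditions~II and~III sufficient.
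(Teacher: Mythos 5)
Your proof is correct, and it takes a different route from the paper's. The paper's argument (given only as a one-line sketch, with details omitted) permutes the vertices of the DAG into a topological order so that $\LB$ becomes block triangular, and asserts that a deficient diagonal block — a zero block for a sink, or a rank-deficient sum $\sum_{j\in\N_i}P_{g_{ij}^*}$ for a vertex with a single or collinear out-going edges — introduces extra null vectors. You instead bound $\rank(\LB)$ directly by observing that the $i$-th block of $\LB\delta$ is confined to $\sum_{j\in\N_i}\Range(P_{g_{ij}^*})$, a subspace of codimension $d$ at a sink and codimension at least $1$ at a vertex whose out-going bearings span a single line (where $P_{g_{ij}^*}=P_{g_i}$ for all $j$, so every summand lies in the same hyperplane $\myspan\{g_i\}^\perp$), and then conclude $\dim\Null(\LB)\ge d+2>d+1\ge\dim\myspan\{\one\otimes I_d,p\}$ by rank–nullity together with Lemma~\ref{result_nullLBNullRB}. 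This buys you several things: it is a complete and elementary argument where the paper's is only sketched; it avoids the triangular structure entirely (a block row of a triangular matrix can have full row rank even when its diagonal block is singular, so the paper's route needs exactly your range observation to be made rigorous anyway); it isolates where acyclicity is actually used (only to guarantee a sink for Conditions~II and~III — Condition~I needs no acyclicity at all); and it gives a quantitative lower bound on the nullity. The bookkeeping that the two deficient vertices and the sink are pairwise distinct is handled correctly, since a sink has no out-going edge.
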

\begin{proof}
Due to the block triangular structure of $\LB$ of directed acyclic graphs, any of the above conditions will introduce additional null spaces for $\LB$ in addition to $\myspan\{\one\otimes I_d, p\}$. The detailed proof is omitted here due to space limit. 
\end{proof} 

A special class of acyclic directed graphs is the \emph{leader-first-follower} (LFF) graph; see \cite{trinh2018bearing}.
\begin{definition}[{Leader-first-follower graph}]\label{definition_LFF}
A \emph{leader-first-follower} (LFF) graph is a directed graph on $n>1$ nodes such that
\begin{itemize}
\item[i)] One vertex (called the leader) has zero out-going edge.
    \item[ii)] One vertex (called the first follower) has one out-going edge and the corresponding edge is incident to the leader.
    \item[iii)] Every other vertex has at least two out-going edges  and their out-going edges are not collinear.
\end{itemize}
\end{definition}

We now give a sufficient condition for bearing equivalence, which characterizes the leader-first-follower (LFF) structure in directed graphs. 

\begin{theorem} \label{thm:LFF_GRAPH}
Directed formations over leader-first-follower graphs are bearing equivalent.
\end{theorem}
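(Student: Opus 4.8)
The plan is to prove the equality $\Null(\LB)=\myspan\{\one\otimes I_d,p\}$ directly and then invoke the sandwich of Lemma~\ref{result_nullLBNullRB} to recover the full chain demanded by Definition~\ref{definition_bearingequivalence}. The key observation is that an LFF graph is acyclic, so after a topological reordering of the vertices in which every edge $(i,j)\in\E$ satisfies $j<i$, the bearing Laplacian $\LB$ becomes block lower triangular; with this labelling the leader is vertex $1$ (it has no out-going edge, hence an empty, i.e.\ zero, diagonal block), and the first follower may be placed at vertex $2$ (its single out-neighbour being the leader). Rather than chasing the opaque condition (II) of Theorem~\ref{theorem_NULLLB=1andp}, I would exploit this triangular form to bound $\dim\Null(\LB)$ and then close the argument by a dimension count.

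Concretely, writing $x=[x_1^\T,\dots,x_n^\T]^\T$ with $x_i\in\R^d$, the $i$-th block row of $\LB x=0$ reads $\sum_{j\in\N_i}P_{g_{ij}^*}(x_i-x_j)=0$. Because the reordered system is block lower triangular, these equations can be solved sequentially from $i=1$ upwards, each one determining $x_i$ from the already-fixed out-neighbour values $x_j$ ($j<i$). I would then tally the degrees of freedom introduced at each vertex. The leader's row is vacuous, so $x_1\in\R^d$ is entirely free, contributing $d$ parameters (the translation freedom); the first follower's row is $P_{g_{21}^*}(x_2-x_1)=0$, which by Lemma~\ref{lemma:projection} forces $x_2=x_1+c_2\,g_{21}^*$ with $c_2\in\R$, contributing one further parameter (the scaling freedom).

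For every remaining vertex $i\ge 3$, the hypothesis that its out-going edges are non-collinear means at least two of the bearings $g_{ij}^*$ are non-parallel, so by Lemma~\ref{lemma:sum_projection} the diagonal block $M_i:=\sum_{j\in\N_i}P_{g_{ij}^*}$ is positive definite, hence invertible; the $i$-th equation then determines $x_i=M_i^{-1}\sum_{j\in\N_i}P_{g_{ij}^*}x_j$ uniquely and introduces no new freedom. Counting yields $\dim\Null(\LB)=d+1$. Since the formation contains an edge with distinct endpoints, $p\notin\myspan\{\one\otimes I_d\}$ and therefore $\dim\myspan\{\one\otimes I_d,p\}=d+1$; combining this with the inclusions $\myspan\{\one\otimes I_d,p\}\subseteq\Null(\RB)\subseteq\Null(\LB)$ of Lemma~\ref{result_nullLBNullRB} forces all three spaces to coincide, which is precisely the bearing equivalence of Definition~\ref{definition_bearingequivalence}.

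The main obstacle I anticipate is the bookkeeping that makes the sequential solve rigorous: one must verify that a topological order realising $\LB$ as block lower triangular always exists (a standard fact for acyclic graphs) and, more importantly, that the positive definiteness supplied by Lemma~\ref{lemma:sum_projection} genuinely prevents any remaining vertex from contributing extra kernel directions, so that the count is exactly $d+1$ and not larger. A secondary point to handle cleanly is confirming that the leader contributes a full $d$ translational dimensions and the first follower an independent scaling dimension, i.e.\ that the parametrisation $(x_1,c_2)\mapsto x$ is injective, which holds because $g_{21}^*\neq 0$.
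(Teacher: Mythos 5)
Your proposal is correct and follows exactly the route the paper indicates: it exploits the block (lower) triangular form of $\LB$ under a topological ordering of the acyclic LFF graph, uses Lemma~\ref{lemma:sum_projection} to make every diagonal block beyond the leader and first follower invertible, and closes with the dimension count $d+1$ against the inclusion chain of Lemma~\ref{result_nullLBNullRB}. The paper omits these details ``due to space limit,'' and your write-up supplies precisely the missing bookkeeping (including the observation that the leader and first follower can always be placed first and second in the ordering), so no further comparison is needed.
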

The proof of Theorem~\ref{thm:LFF_GRAPH} again follows from the  block triangular structure of $\LB$, and the detail is omitted here due to space limit. We also remark that this theorem can be seen as an extension of the distance persistence (\cite{hendrickx2007directed}, \cite{anderson2008rigid}) to bearing equivalence in directed LFF graphs. 

\section{Bearing equivalence in directed  graphs containing cycles} \label{Sec:BP_cyclic}


In this section, we focus on directed graphs containing cycles and derive several necessary and/or sufficient conditions on bearing Laplacian and bearing equivalence. 
\subsection{Spectrum of bearing Laplacian}
For directed graphs that contain cycles, the associated  bearing Laplacian (which is asymmetric) cannot be written in a block triangular structure, and thus its eigenvalues are often complex. 

The following conjecture on the bearing Laplacian spectrum was proposed in  \citep{zhao2015bearing}.  
\begin{conjecture}
The eigenvalues of the
bearing Laplacian $\LB$ of a directed formation have nonnegative
real parts. 
\end{conjecture}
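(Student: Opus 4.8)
The plan is to reduce the general directed graph to a single strongly connected component and then to certify nonnegativity of the real part through a quadratic-form (numerical-range) argument.

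First I would pass to the condensation of $\G$. Ordering the vertices consistently with a topological order of the strongly connected components (SCCs) renders $\LB$ block triangular: every edge between two distinct SCCs contributes its $-P_{g_{ij}^*}$ term to a strictly off-diagonal block while its $+P_{g_{ij}^*}$ term enters the diagonal weight $[\LB]_{ii}$. By Lemma~\ref{lemma:triangular_matrix}, $\lambda(\LB)$ is the union of the spectra of the diagonal blocks. Each diagonal block equals the bearing Laplacian of the subgraph induced on one SCC plus an extra block-diagonal Hermitian positive semi-definite term (the projections along out-edges leaving that component), and a singleton SCC yields a single positive semi-definite block whose eigenvalues are real and nonnegative, exactly as in the acyclic analysis of Section~\ref{Sec:BP_acyclic}. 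Since adding a Hermitian positive semi-definite matrix keeps the numerical range in the closed right half-plane, it suffices to prove the claim for a strongly connected formation, which isolates the genuinely cyclic difficulty.

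For a strongly connected formation I would try to show that the field of values of $\LB$ lies in the closed right half-plane, which would imply the spectral claim. Writing $P_{ij}:=P_{g_{ij}^*}$, so that $P_{ij}=P_{ij}^{\T}=P_{ij}^{2}$, and $x=[x_1^{\T},\dots,x_n^{\T}]^{\T}$ with $x_i\in\mathbb{C}^d$, the block entries of $\LB$ give
\begin{align*}
x^{*}\LB x=\sum_{(i,j)\in\E}\big(P_{ij}x_i\big)^{*}\big(P_{ij}(x_i-x_j)\big),
\end{align*}
hence
\begin{align*}
\mathrm{Re}\,(x^{*}\LB x)=\sum_{(i,j)\in\E}\Big(\|P_{ij}x_i\|^{2}-\mathrm{Re}\,\langle P_{ij}x_i,\,P_{ij}x_j\rangle\Big).
\end{align*}
The aim is to prove this sum is nonnegative; failing a direct proof, I would instead seek a block-diagonal weight $W=\mydiag(W_i)\succ0$ satisfying the Lyapunov-type inequality $W\LB+\LB^{\T}W\succeq0$, which again forces $\mathrm{Re}\,\lambda(\LB)\ge0$ (found for each diagonal block directly, so as to accommodate the positive semi-definite additions above). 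The decisive extra ingredient should be the geometric consistency of the bearings: for the formation that defines $\LB$, the edge vectors close up around every directed cycle $C$, $\sum_{k\in C}e_k=0$, equivalently $\sum_{k\in C}\|e_k\|g_k^{*}=0$, a constraint that an arbitrary assignment of projection weights need not satisfy.

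The main obstacle is that the per-edge summands are not individually nonnegative: $\|P_{ij}x_i\|^{2}-\mathrm{Re}\,\langle P_{ij}x_i,P_{ij}x_j\rangle$ is negative as soon as $\|P_{ij}x_j\|>\|P_{ij}x_i\|$, so positivity can only arise from global cancellation along the cycles. For the ordinary directed Laplacian $L=J^{\T}H$ the analogous sum telescopes: each vertex contributes $|x_i|^{2}$ as a source while the cross terms are dominated by $\tfrac12(|x_i|^{2}+|x_j|^{2})$ through the arithmetic--geometric mean inequality, giving $\mathrm{Re}\,(x^{*}Lx)\ge0$. This telescoping breaks in the bearing case because the ``source'' norm $\|P_{ij}x_i\|$ uses the out-edge projection $P_{ij}$, whereas the same vertex enters a neighbouring edge through a \emph{different} in-edge projection, and the non-commuting projection weights prevent the per-vertex totals from cancelling. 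A block Gershgorin bound is likewise too weak, since the diagonal blocks $\sum_{j\in\N_i}P_{ij}$ can be singular (when all out-going bearings at a vertex are parallel), so their Gershgorin discs already reach into the left half-plane. I therefore expect the hard step to be an inductive construction of $W$ along an ear decomposition of the strongly connected graph, peeling off one cycle at a time and using the closure identity $\sum_{k\in C}\|e_k\|g_k^{*}=0$ to control the sign of the cross terms. Because the crude numerical-range bound is insufficient, and the numerical range may in fact protrude into the left half-plane even when the spectrum does not, this construction is precisely the gap that keeps the statement at the level of a conjecture.
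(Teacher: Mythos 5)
There is a decisive problem: the statement you are trying to prove is false, and the paper's own treatment of it is a \emph{refutation}, not a proof. Immediately after stating the conjecture the paper says ``This conjecture is not true'' and exhibits a concrete counterexample: the four-agent cyclic formation of Fig.~\ref{fig_example_equivalent_cyclic}(a) with positions $p_1=[5.8009,\,0.1698]^\T$, $p_2=[1.2086,\,8.6271]^\T$, $p_3=[4.8430,\,8.4486]^\T$, $p_4=[2.0941,\,5.5229]^\T$, for which the computed spectrum of $\LB$ contains the eigenvalue $\lambda_8=-0.0559<0$. Consequently no part of your program can be completed: in particular, for that example there cannot exist a block-diagonal $W\succ0$ with $W\LB+\LB^{\T}W\succeq0$, since such a $W$ would force $\mathrm{Re}\,\lambda(\LB)\ge0$ and contradict the computed spectrum. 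The cycle-closure identity $\sum_{k\in C}\|e_k\|g_k^{*}=0$ that you hoped would rescue the cross terms is satisfied by the counterexample configuration, so it is not strong enough to confine the spectrum to the closed right half-plane.

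To your credit, your analysis of why the argument breaks is accurate and essentially diagnoses the falsity: the per-edge summands $\|P_{ij}x_i\|^{2}-\mathrm{Re}\,\langle P_{ij}x_i,P_{ij}x_j\rangle$ are not individually nonnegative, the non-commuting out-edge versus in-edge projections destroy the telescoping that saves the scalar directed Laplacian, and block Gershgorin is useless when a vertex's out-going bearings are parallel. The missing step was to treat these obstructions not as a gap to be bridged by a cleverer $W$, but as evidence to hunt for a counterexample — e.g., by randomly sampling positions for a small cyclic digraph and checking the spectrum numerically, which is exactly how the paper settles the question. Note also that the paper's positive spectral result (real, nonnegative eigenvalues) is confined to directed \emph{acyclic} graphs, where your block-triangularization reduces everything to sums of projection matrices; the genuinely cyclic case is precisely where the conjecture fails.
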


This conjecture is not true. For counterexamples, see the formation graphs in Fig.~\ref{fig_example_equivalent_cyclic} which will be discussed later. Note that the entries of the bearing Laplacian (and thus its eigenvalue locations) depend on the configurations (agents' positions). For a given bearing Laplacian associated with a cyclic directed graph and under some special positions, eigenvalues with negative real parts   can occur. 

\subsection{Conditions for bearing equivalence}

We first generalize a sufficient   condition for characterizing $\Null(\RB)=\Null(\LB)$   in $\mathbb{R}^2$ (from \cite[Proposition 1]{zhao2015bearing} to bearing formations in general-dimensional spaces.  
\begin{proposition}
For a directed formation $\mathcal{G}(p)$ in $\mathbb{R}^d$ ($d \geq 2$), if each agent has at most two non-collinear out-going edges, then the
formation satisfies  $\Null(\RB)=\Null(\LB)$.
\end{proposition}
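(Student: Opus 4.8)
The plan is to reduce the desired kernel equality to the intersection condition appearing in Theorem~\ref{theorem_NULLLB=1andp}, and then to discharge that condition using the local structure forced by the degree hypothesis. Since $\Null(\tilde{R}_B)=\Null(\RB)$ and, by Lemma~\ref{result_nullLBNullRB}, the inclusion $\Null(\RB)\subseteq\Null(\LB)$ always holds, it suffices to establish the reverse inclusion. Writing $\LB=\tilde{J}_B^\T\tilde{R}_B$ as in \eqref{eq:formula} and applying Lemma~\ref{lemma_nullAB} with $A=\tilde{J}_B^\T$ and $B=\tilde{R}_B$, the equality $\Null(\LB)=\Null(\tilde{R}_B)=\Null(\RB)$ is equivalent to condition II of Theorem~\ref{theorem_NULLLB=1andp}, namely $\Null(\tilde{J}_B^\T)\cap\Range(\tilde{R}_B)=\{0\}$. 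So the entire proof comes down to showing that this intersection is trivial when each agent has at most two non-collinear out-going edges.

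First I would characterize the two subspaces blockwise. Any $y\in\Range(\tilde{R}_B)$ has its $k$-th block of the form $y_k=P_{g_k}(\,\cdot\,)$, hence $y_k\in\Range(P_{g_k})$ and $P_{g_k}y_k=y_k$ by Lemma~\ref{lemma:projection}. On the other hand, because $\bar J$ retains only one endpoint of each edge, the condition $y\in\Null(\tilde{J}_B^\T)$ decouples into a family of per-vertex identities: for every vertex $a$, $\sum_{k}P_{g_k}y_k=0$, where the sum runs over the out-going edges of $a$ (this is precisely what makes the diagonal block of $\LB$ equal to $\sum_{j\in\N_a}P_{g_{aj}}$). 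Intersecting the two descriptions and substituting $P_{g_k}y_k=y_k$, the constraints collapse to the clean blockwise relations $\sum_{k\in\text{out}(a)} y_k=0$ at every vertex $a$, with each $y_k$ orthogonal to its bearing $g_k$.

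Next I would run the local analysis vertex by vertex, with Lemma~\ref{lemma:sum_projection} as the key tool. If $a$ has a single out-going edge $k$, the relation reads $y_k=0$ immediately. If $a$ has two out-going edges $k_1,k_2$ with non-collinear bearings, the relation gives $y_{k_1}=-y_{k_2}$ with $y_{k_1}\perp g_{k_1}$ and $y_{k_2}\perp g_{k_2}$; here the positive-definiteness of $P_{g_{k_1}}+P_{g_{k_2}}$ guaranteed by Lemma~\ref{lemma:sum_projection} is what I would exploit. In $\R^2$ this already finishes the argument: the orthogonal complements of $g_{k_1}$ and $g_{k_2}$ are one-dimensional and, by non-collinearity, meet only at the origin, so $y_{k_1}=-y_{k_2}$ forces $y_{k_1}=y_{k_2}=0$; running this over all vertices kills every block and yields $y=0$. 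This is exactly the reasoning underlying the $\R^2$ statement being generalized.

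The hard part will be the passage to $d\ge 3$. There the orthogonal complements of two non-collinear bearings overlap in a $(d-2)$-dimensional subspace, so $y_{k_1}=-y_{k_2}$ no longer forces the blocks to vanish and the purely per-vertex argument does not close. To finish, I would have to propagate the local cancellations through the graph, using the global consistency that all blocks $y_k$ are generated by a single displacement $x$ (so that relative positions telescope around directed walks and cycles), and combine this with the matrix-weighted positive-definiteness at each two-edge vertex to show that no nonzero $y$ can be simultaneously consistent with one $x$ and with all the per-vertex cancellations. Particular care is needed at vertices with fewer than two non-collinear out-going edges, since these are precisely the places (cf. Proposition~\ref{prop:non_equivalent_condition}) where extra elements can otherwise enter $\Null(\LB)$; it is the non-collinearity hypothesis at the remaining vertices that must ultimately rule this out. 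I expect this global propagation step, rather than any single block computation, to be the crux of the proof.
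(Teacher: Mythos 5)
The paper does not actually prove this proposition---it defers entirely to the planar case in \cite[Proposition 1]{zhao2015bearing} and omits the details---so the only meaningful comparison is with that planar argument. Your reduction is sound: by Lemma~\ref{lemma_nullAB} applied to $\LB=\tilde{J}_B^\T\tilde{R}_B$, the kernel equality is exactly the condition $\Null(\tilde{J}_B^\T)\cap\Range(\tilde{R}_B)=\{0\}$ (note you need only this condition, not condition I of Theorem~\ref{theorem_NULLLB=1andp}, and you apply the lemma correctly), and your per-vertex analysis for $d=2$ is complete and correct: every edge block is governed by the identity at its tail vertex, out-degree one kills the block outright, and at an out-degree-two vertex the block lies in $g_{k_1}^\perp\cap g_{k_2}^\perp$, which is trivial in the plane precisely because of non-collinearity. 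This is essentially the cited proof.

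The genuine gap is exactly where you flag it, and it cannot be closed by the ``global propagation'' you outline, because the statement as literally written fails for $d\ge3$. Take $n=3$ in $\R^3$ with $\E=\{(1,2),(1,3)\}$, $p_1=0$, $p_2=u_1$, $p_3=u_2$, where $u_1,u_2,u_3$ are the standard basis vectors; every agent then has at most two non-collinear out-going edges. The vector $x=(x_1,x_2,x_3)$ with $x_1=0$, $x_2=-u_3$, $x_3=u_3$ satisfies $\LB x=0$, since the first block row gives $-P_{u_1}(-u_3)-P_{u_2}(u_3)=u_3-u_3=0$ and the block rows of agents $2$ and $3$ vanish identically, yet $P_{g_{12}}(x_2-x_1)=-u_3\ne0$, so $x\notin\Null(\RB)$ and $\Null(\RB)\subsetneq\Null(\LB)$. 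Your instinct that the per-vertex computation is insufficient in $d\ge3$ is therefore correct in the strongest possible sense: the deficiency $\dim(g_{k_1}^\perp\cap g_{k_2}^\perp)=d-2$ is not an artifact of the method but the source of genuine extra null vectors, and no graph-propagation argument can rule them out under the stated hypotheses alone (the example above admits no propagation because the two sinks impose no constraints). Any correct proof must either restrict to $d=2$, or add hypotheses---e.g.\ connectivity/spanning-tree structure as in Proposition~\ref{prop:necessary} together with an inductive construction as in Theorem~\ref{thm:LFF_GRAPH} and Proposition~\ref{prop:growing_BP}---under which the global coupling actually eliminates the $(d-2)$-dimensional slack at each two-edge vertex. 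As written, your proposal proves the $\R^2$ case only, and the advertised generalization to $\R^d$ is not established.
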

The proof follows from the proof of \cite[Proposition 1]{zhao2015bearing} and is omitted here. We note that this condition is not necessary.   For a counterexample, see the graph (c) in Fig.~\ref{fig_example_equivalent_cyclic}. In this example,  agent 2 has three out-going edges while the bearing formation of   Fig.~\ref{fig_example_equivalent_cyclic}(c) satisfies $\Null(\RB)=\Null(\LB)$ and  still is bearing equivalent.

Now we provide a necessary condition for bearing equivalence in directed graphs. 
\begin{proposition} \label{prop:necessary}
For a directed bearing formation $\mathcal{G}(p)$, if the associated bearing Laplacian satisfies $\Null(\LB)=\myspan\{\one\otimes I_d, p\}$, then the underlying directed graph   contains a directed spanning tree.  
\end{proposition}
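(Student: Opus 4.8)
The plan is to prove the contrapositive: if the directed graph $\G$ does \emph{not} contain a directed spanning tree, then $\Null(\LB) \supsetneq \myspan\{\one\otimes I_d, p\}$, so that bearing equivalence fails. The starting point is the structural analogy between $\LB$ and the ordinary directed Laplacian $L$ recorded in the formula $\LB = \bar J^\T \mydiag(P_{g_k})\bar H$ from \eqref{eq:formula_LB}, together with the fact (stated in Section~\ref{sec:prelim}) that $\rank(L) = n-1$ with $0$ a simple eigenvalue if and only if $\G$ contains a directed spanning tree. Thus absence of a spanning tree is exactly the statement that $0$ is \emph{not} a simple eigenvalue of the scalar Laplacian $L = J^\T H$; equivalently, $\dim\Null(L^\T) \geq 2$, and the graph possesses at least two distinct ``closed'' subsets of vertices (reachable sinks / terminal strongly connected components) from which no edge leaves.

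First I would exploit this graph-theoretic structure. When $\G$ has no spanning tree, it contains at least two disjoint vertex subsets $\V_1, \V_2$ that are each \emph{closed} under the out-neighbor relation, meaning no edge $(i,j)\in\E$ has $i$ in the subset and $j$ outside it (these are the terminal/leaf strongly connected components in the condensation of $\G$). Each such closed set contributes an independent degree of freedom to $\Null(L)$ via an indicator-type vector. The key step is to lift this scalar null vector to a block vector in $\R^{dn}$ that lies in $\Null(\LB)$ but is \emph{not} in $\myspan\{\one\otimes I_d, p\}$. Concretely, for a closed vertex set $\V_\ell$ I would consider the vector $z = w \otimes c$ where $w \in \R^n$ is the indicator of $\V_\ell$ and $c\in\R^d$ is an arbitrary constant vector: because no edges leave $\V_\ell$, the $i$th block of $\LB z$ for $i \in \V_\ell$ involves only projection-weighted differences of the \emph{identical} values $c$, which vanish, and for $i\notin\V_\ell$ the relevant blocks of $J$ or the shared value structure again force cancellation. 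I would verify directly from the block definition of $\LB$ (rows $[\LB]_{ii}=\sum_{j\in\N_i}P_{g_{ij}^*}$, $[\LB]_{ij}=-P_{g_{ij}^*}$) that such a ``piecewise-constant-on-a-closed-set'' vector is annihilated.

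Having produced at least two independent closed sets $\V_1,\V_2$, I obtain (at least) two independent vectors $w_1\otimes c, w_2\otimes c \in \Null(\LB)$; since $\myspan\{\one\otimes I_d, p\}$ has dimension $d+1$ and is spanned by the ``globally constant'' directions $\one\otimes I_d$ (i.e.\ $\one\otimes c$) plus the single direction $p$, the indicator $w_\ell$ supported on a proper closed subset is not a multiple of $\one$, and (generically, and in fact always after removing the common $\one\otimes c$ component) these new vectors escape $\myspan\{\one\otimes I_d, p\}$. This yields $\dim\Null(\LB) > d+1$, contradicting $\Null(\LB)=\myspan\{\one\otimes I_d, p\}$. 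The main obstacle I anticipate is the bookkeeping at the blocks indexed by vertices \emph{outside} the closed set: one must argue cleanly that the indicator vector lands in $\Null(\LB)$ despite $\LB$ being asymmetric, since for directed graphs $\Null(\LB)\neq\Null(\LB^\T)$ in general. The cleanest route is probably to work through the factorization $\LB=\bar J^\T\,\mydiag(P_{g_k})\,\bar H$: a closed set makes the corresponding rows of $\bar J^\T$ vanish against edges leaving it, so one shows that $\mydiag(P_{g_k})\bar H (w_\ell\otimes c)$ already has zero contribution on the edges $\bar J^\T$ sees. Handling the edge set carefully, and confirming that the constructed null vectors are genuinely linearly independent of $\{\one\otimes I_d,\,p\}$ rather than accidentally collapsing onto $p$, will be the delicate part of the argument.
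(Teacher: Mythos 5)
Your overall strategy---prove the contrapositive and extract extra null vectors from the two terminal (closed) vertex sets that must exist when $\G$ has no directed spanning tree---is essentially the route the paper takes (the paper lifts the extra null vectors of the scalar Laplacian $\bar L=\bar J^\T\bar H$ into $\Null(\LB)$). However, the central step of your construction fails: the vector $z=w_\ell\otimes c$, with $w_\ell$ the indicator of a closed set $\V_\ell$, is in general \emph{not} in $\Null(\LB)$. For a vertex $i\notin\V_\ell$ that has an out-edge into $\V_\ell$, the $i$-th block of $\LB z$ equals $\sum_{j\in\N_i}P_{g_{ij}}(z_i-z_j)=-\sum_{j\in\N_i\cap\V_\ell}P_{g_{ij}}c$, and nothing forces this to vanish. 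Concretely, take $\V=\{1,2,3\}$, $\E=\{(1,2),(1,3)\}$ with $g_{12}\not\parallel g_{13}$: the closed sets are $\{2\}$ and $\{3\}$, and for $z=(0,c,0)$ the first block of $\LB z$ is $-P_{g_{12}}c\neq0$ unless $c$ happens to be parallel to $g_{12}$. The cancellation you invoke at vertices outside the closed set is precisely what the matrix weights destroy: the scalar identity $\sum_{j\in\N_i}(w_i-w_j)=0$ does not imply $\sum_{j\in\N_i}(w_i-w_j)P_{g_{ij}}c=0$, because the projections $P_{g_{ij}}$ differ from edge to edge. The same obstruction defeats the refinement in which $w$ is a genuine (harmonic) null vector of $L$ rather than a raw indicator, so closedness of $\V_\ell$ alone does not rescue the argument; your proposed repair via the factorization $\LB=\bar J^\T\mydiag(P_{g_k})\bar H$ also fails, since an edge entering $\V_\ell$ from outside contributes $P_{g_k}c\neq0$ on a row of $\bar J^\T$ indexed by a vertex outside $\V_\ell$.

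What does go through is a dimension count that uses the closed sets only for the \emph{row structure} of $\LB$, without exhibiting explicit null vectors. If $\V_1,\V_2$ are disjoint nonempty closed sets, then $\N_i\subseteq\V_\ell$ for every $i\in\V_\ell$, so the block rows of $\LB$ indexed by $\V_\ell$ are supported on the columns of $\V_\ell$ and coincide with the bearing Laplacian $A_\ell$ of the induced subformation on $\V_\ell$. Ordering the vertices as $\V_1,\V_2,\V_0$ with $\V_0=\V\setminus(\V_1\cup\V_2)$, one gets $\rank(\LB)\le\rank(A_1)+\rank(A_2)+d\,|\V_0|$, and Lemma~\ref{result_nullLBNullRB} applied to each subformation gives $\rank(A_\ell)\le d\,|\V_\ell|-d$; hence $\rank(\LB)\le dn-2d$ and $\dim\Null(\LB)\ge 2d>d+1$ for $d\ge2$, contradicting the hypothesis. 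I recommend this fix because the inclusion your argument needs (piecewise-constant scalar null vectors surviving the projection weights) is also the delicate point in the paper's own proof, where $\Null(\bar L)\subseteq\Null(\LB)$ is asserted without addressing the same weighted-cancellation issue.
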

\begin{proof}
If the underlying directed graph does not contain a spanning tree, then the Laplacian matrix $L =  J^T  H$ will have additional null vector besides $\{\one\}$, implying that the augmented Laplacian matrix $\bar L = L \otimes I_d = \bar J^T \bar H$ will have additional null space besides $\myspan\{\one\otimes I_d\}$, i.e., $\myspan\{\one\otimes I_d\}   \subset \Null(\bar L)$.  According to Lemma~\ref{lemma_nullAB}, since the block diagonal matrix $\mydiag{(P_{g_k})}$ is always singular that leads to the null space $\myspan\{p\}$ of $\LB$,  there  holds $$\Null(\bar L) = \Null(\bar J^\T \bar H) \subset \Null(\bar J^\T  \mydiag{(P_{g_k})}\bar{H}  ) = \Null(\LB),$$
which implies $\myspan\{\one\otimes I_d, p\}   \subset \Null(\LB)$.  Thus, to ensure that $\Null(\LB)=\myspan\{\one\otimes I_d, p\}$, the underlying directed graph must contain a spanning tree. 
\end{proof}
 
Proposition~\ref{prop:necessary} provides a necessary condition for bearing equivalence that  holds for both acyclic and cyclic directed graphs. In particular, Condition (1) in Proposition~\ref{prop:non_equivalent_condition} violates the spanning tree condition, and therefore any bearing formation with two leader agents (i.e., two vertices with zero out-degree) are not bearing equivalent. 
 
\subsection{Growing bearing equivalent formations}
A full characterization of bearing equivalence in directed graphs containing cycles still remains an open problem. We now present an alternative characterization to \textit{grow} bearing equivalent formations, to more number of agents  or to a higher-dimensional space. 
\begin{proposition} \label{prop:growing_BP}
Consider a bearing equivalent formation $\mathcal{G}(p)$ in $\mathbb{R}^d$ with $n$ agents modelled by a directed graph $\mathcal{G}$. Suppose an additional vertex (agent) $p'$ is added to $\mathcal{G}(p)$ with at least two out-going non-collinear edges  incident to existing vertices in $\mathcal{G}(p)$. Then this augmented formation $\mathcal{G}'(p, p')$ is bearing equivalent. In particular, if  $\myspan\{\one_n\otimes I_d, p\} = \Null(\LB (\mathcal{G}))$, then $\myspan\{\one_{n+1}\otimes I_d, (p, p')\} = \Null(\LB(\mathcal{G}'))$. 
\end{proposition}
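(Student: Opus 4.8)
The plan is to exploit the fact that the appended agent $p'$ contributes only \emph{out-going} edges, so that after ordering $p'$ as the last vertex the augmented bearing Laplacian inherits a block lower-triangular structure. First I would write down the block partition
\begin{align*}
\LB(\mathcal{G}') = \begin{bmatrix} \LB(\mathcal{G}) & 0_{dn\times d} \\ B & D \end{bmatrix},
\end{align*}
and justify each block from the defining formula for $\LB$. The upper-left block is unchanged because no original edge is modified and none of the new edges sinks at an existing vertex; the upper-right block vanishes because $p'$ has \emph{no} in-coming edges, so $[\LB(\mathcal{G}')]_{i,n+1}=0_{d\times d}$ for every $i\le n$; the block $B\in\R^{d\times dn}$ collects the off-diagonal terms $-P_{g^*_{(n+1)j}}$ of the new edges; and the new diagonal block is $D=\sum_{j\in\N_{n+1}}P_{g^*_{(n+1)j}}\in\R^{d\times d}$.

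The crucial step is to establish that $D$ is nonsingular. Since $p'$ is assumed to have at least two non-collinear out-going edges, the sum $D$ contains at least two non-parallel projection matrices, and item (4) of Lemma~\ref{lemma:sum_projection} gives that $D$ is positive definite, hence invertible. This invertibility is the heart of the argument: it is precisely what prevents the null space from growing when the vertex is appended, and it is where the non-collinearity hypothesis is used. I expect this to be the only genuinely load-bearing point of the proof.

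With $D$ invertible, I would read the null space off directly from the triangular structure. Writing $z=[x^\T,y^\T]^\T$ with $x\in\R^{dn}$ and $y\in\R^{d}$, the equation $\LB(\mathcal{G}')z=0$ is equivalent to $\LB(\mathcal{G})x=0$ together with $Bx+Dy=0$; the second relation determines $y=-D^{-1}Bx$ uniquely from $x$. Hence the projection $z\mapsto x$ is a linear isomorphism from $\Null(\LB(\mathcal{G}'))$ onto $\Null(\LB(\mathcal{G}))$, so
\begin{align*}
\dim\Null(\LB(\mathcal{G}')) = \dim\Null(\LB(\mathcal{G})) = d+1,
\end{align*}
the last equality coming from the bearing equivalence hypothesis $\Null(\LB(\mathcal{G}))=\myspan\{\one_n\otimes I_d,p\}$ together with the infinitesimal bearing rigidity of $\mathcal{G}(p)$ (Theorem~\ref{theorem_conditionInfiParaRigid}). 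On the other hand, Lemma~\ref{result_nullLBNullRB} applied to $\mathcal{G}'$ supplies the automatic inclusion $\myspan\{\one_{n+1}\otimes I_d,(p,p')\}\subseteq\Null(\LB(\mathcal{G}'))$, and the left-hand span also has dimension $d+1$ (appending the coordinate block $p'$ cannot collapse the span, because $p$ is not a pure translation for the non-degenerate base formation). Matching dimensions forces this inclusion to be an equality, which proves the ``in particular'' assertion.

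Finally, to conclude full bearing equivalence I would invoke the sandwich of Lemma~\ref{result_nullLBNullRB} once more: $\myspan\{\one_{n+1}\otimes I_d,(p,p')\}\subseteq\Null(\RB(\mathcal{G}'))\subseteq\Null(\LB(\mathcal{G}'))$. Since the two outer subspaces have just been shown to coincide, $\Null(\RB(\mathcal{G}'))$ is squeezed between them and must equal $\myspan\{\one_{n+1}\otimes I_d,(p,p')\}$; by Theorem~\ref{theorem_conditionInfiParaRigid} the augmented formation is infinitesimally bearing rigid, and Definition~\ref{definition_bearingequivalence} then certifies that $\mathcal{G}'(p,p')$ is bearing equivalent. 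Apart from the invertibility of $D$, the only delicate bookkeeping is confirming that the augmented span has the full dimension $d+1$, which I would dispatch with the translation-span observation noted above.
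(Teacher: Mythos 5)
Your proposal is correct and follows essentially the same route as the paper: order the new vertex last so that $\LB(\mathcal{G}')$ is block lower triangular with the old $\LB(\mathcal{G})$ in the corner, and use the two non-collinear out-going edges to show the new diagonal block introduces no additional null vectors. The paper states this only as a sketch ("it is straightforward to show"); your write-up supplies exactly the missing details — positive definiteness of $D$ via Lemma~\ref{lemma:sum_projection}, the resulting isomorphism $\Null(\LB(\mathcal{G}'))\cong\Null(\LB(\mathcal{G}))$, and the dimension-matching sandwich with Lemma~\ref{result_nullLBNullRB} — all of which are sound.
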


\begin{proof} Without loss of generality we  
assign the index `$n+1$' to the new vertex $p'$ in the augmented graph $\mathcal{G}'$. The   augmented bearing Laplacian $\LB(\mathcal{G}')$ with the augmented graph $\mathcal{G}'$ can be expressed by
\begin{align} \label{eq:LB_new}
\LB(\mathcal{G}'(p, p')) = 
    \left[ \begin{array}{ *{4}{c} }
    & & &   0 \\
    & & & \vdots  \\
    \multicolumn{3}{c}
      {\raisebox{\dimexpr\normalbaselineskip+.7\ht\strutbox-.5\height}[0pt][0pt]
        {\scalebox{1.2}{$\LB (\mathcal{G}(p))$}}} & 0 \\
    \cdots & - P_{g_{(n+1)j}} & \cdots &  \sum_{j\in\N_{n+1}}P_{g_{(n+1)j}}   
  \end{array} \right]
\end{align}
The condition that the new vertex $p'$ has at least two out-going non-collinear edges connected with existing vertices guarantees that no additional null vector is introduced in $\LB(\mathcal{G}')$ with the edge addition. Therefore, if $\rank(\RB(\mathcal{G})) = \rank(\LB(\mathcal{G}))$, then it holds that $\rank(\RB(\mathcal{G}')) = \rank(\LB(\mathcal{G}'))$.
In particular, with the matrix structure in \eqref{eq:LB_new}, it is straightforward to show that if $\myspan\{\one_n\otimes I_d, p\} = \Null(\LB (\mathcal{G}))$, then $\myspan\{\one_{n+1}\otimes I_d, (p, p')\} = \Null(\LB(\mathcal{G}'))$. 
\end{proof}
This proposition holds for both acyclic and cyclic directed formations, and therefore can be used to analyze complex bearing formations if they can be decomposed by simple sub-graphs consisting of vertices with non-collinear  out-going edges. 
 
The following statement shows that a bearing equivalent formation in a lower-dimensional space will remain bearing equivalent in a higher-dimensional space.   
\begin{proposition} (Dimensional invariance) \label{prop: dim_invariance}
Bearing
equivalence is invariant to space dimensions. 
\end{proposition}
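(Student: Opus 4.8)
The plan is to exploit the block structure that the extra dimensions impose on the projection matrices, and then to reduce the claim to a null-space dimension count using the inclusion chain of Lemma~\ref{result_nullLBNullRB} together with the spanning-tree necessary condition of Proposition~\ref{prop:necessary}. Fix a bearing equivalent formation $\G(p)$ in $\R^d$ and embed it into $\R^{d'}$ ($d'>d$) by the coordinate embedding $p_i\mapsto p_i'=[p_i^\T,\,0,\dots,0]^\T$; any isometric embedding reduces to this one since rotating the ambient space by $Q\in O(d')$ conjugates $\RB$ and $\LB$ by $I_n\otimes Q$ and hence preserves both null-space dimensions and the equivalence property. Under this embedding every edge vector, and therefore every bearing, inherits the block form $g_k'=[g_k^\T,\,0]^\T$, so that the higher-dimensional projection matrices split as
$$P_{g_k'} = \begin{bmatrix} P_{g_k} & 0 \\ 0 & I_{d'-d} \end{bmatrix}.$$

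First I would regroup the $d'n$ vertex-coordinate indices (and, compatibly, the $d'm$ edge-coordinate indices) by coordinate rather than by vertex; this is a similarity by a permutation matrix $\Pi$, which turns $\bar{H}'=H\otimes I_{d'}$ and $\bar{J}'=J\otimes I_{d'}$ into $I_{d'}\otimes H$ and $I_{d'}\otimes J$. Because each $P_{g_k'}$ carries a $P_{g_k}$ block on the first $d$ coordinates and an identity block on the remaining $d'-d$ coordinates, the regrouped weight $\mydiag(P_{g_k'})$ is block diagonal across these two coordinate groups, and so is $\LB^{(d')}=\bar{J}'^{\T}\mydiag(P_{g_k'})\bar{H}'$ from \eqref{eq:formula_LB}. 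On the first $d$ coordinates it reproduces exactly the coordinate-major form of $\LB^{(d)}=\bar{J}^\T\mydiag(P_{g_k})\bar{H}$, while on each of the remaining $d'-d$ coordinates all projection weights collapse to the scalar $1$, leaving the plain directed graph Laplacian $L=J^\T H$. Thus
$$\Pi\,\LB^{(d')}\,\Pi^\T = \LB^{(d)}\;\oplus\;\underbrace{L\oplus\cdots\oplus L}_{d'-d}.$$

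Next I would count dimensions. By hypothesis $\dim\Null(\LB^{(d)})=d+1$, and since $\G(p)$ is bearing equivalent in $\R^d$, Proposition~\ref{prop:necessary} forces a directed spanning tree, so $0$ is a simple eigenvalue of $L$ and $\dim\Null(L)=1$. The block-diagonal form then yields $\dim\Null(\LB^{(d')})=(d+1)+(d'-d)=d'+1$. On the other hand, $\myspan\{\one_n\otimes I_{d'},\,p'\}$ has dimension $d'+1$ for a non-degenerate configuration, and by Lemma~\ref{result_nullLBNullRB} it is contained in $\Null(\RB^{(d')})\subseteq\Null(\LB^{(d')})$. Matching dimensions collapses the inclusion chain to equalities, giving $\Null(\RB^{(d')})=\Null(\LB^{(d')})=\myspan\{\one_n\otimes I_{d'},\,p'\}$, i.e.\ bearing equivalence in $\R^{d'}$. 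The same decomposition also delivers the converse direction: from the lower bounds $\dim\Null(\LB^{(d)})\ge d+1$ and $\dim\Null(L)\ge1$ (as $\one\in\Null(L)$), bearing equivalence in $\R^{d'}$ forces equality in each summand, so the property is genuinely invariant across dimensions.

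The main obstacle is making the coordinate-regrouping decoupling fully rigorous: one must verify that the single permutation $\Pi$ acts compatibly on the edge space $\R^{d'm}$ and the vertex space $\R^{d'n}$, so that $\bar{J}'$, $\mydiag(P_{g_k'})$, and $\bar{H}'$ block-diagonalize \emph{simultaneously} via the commutation identity $H\otimes I_{d'}=\Pi^\T(I_{d'}\otimes H)\Pi$, and that the extra-coordinate block is exactly $L$ rather than a permuted or rescaled variant. This is purely Kronecker-factor bookkeeping, but it is the one place a careless argument could hide an error; once it is pinned down, the dimension count above closes the proof.
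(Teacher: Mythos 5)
Your argument is correct and, notably, more complete than what the paper offers: the paper gives no proof of Proposition~\ref{prop: dim_invariance} at all, only the remark that it generalizes the dimensional invariance of infinitesimal bearing rigidity from \cite{zhao2014TACBearing}. That cited result only controls $\Null(\RB)$; to get bearing \emph{equivalence} one must also control $\Null(\LB)$, and this is exactly what your extra $L$-blocks accomplish. Your decomposition is sound: with $p_i'=[p_i^\T,0]^\T$ each $P_{g_k'}$ splits as $P_{g_k}\oplus I_{d'-d}$, and regrouping coordinates (via a permutation $\Pi_V$ on $\R^{d'n}$ and a compatible $\Pi_E$ on $\R^{d'm}$ applied simultaneously to $\bar J'$, $\mydiag(P_{g_k'})$ and $\bar H'$) yields $\Pi_V\LB^{(d')}\Pi_V^\T=\LB^{(d)}\oplus(L\otimes I_{d'-d})$; the identity weights on the extra coordinates collapse to the unweighted directed Laplacian $L=J^\T H$ exactly as you claim. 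The forward dimension count correctly invokes Proposition~\ref{prop:necessary} to get $\dim\Null(L)=1$, and the inclusion chain of Lemma~\ref{result_nullLBNullRB} then pins $\Null(\RB^{(d')})=\Null(\LB^{(d')})=\myspan\{\one\otimes I_{d'},p'\}$; the converse follows from the lower bounds $\dim\Null(\LB^{(d)})\ge d+1$ and $\dim\Null(L)\ge 1$ as you say. The only points worth tightening in a final write-up are the two you essentially already flag: (i) state explicitly that two distinct permutations (on the vertex space and on the edge space) are inserted as $\Pi^\T\Pi=I$ pairs so that all three Kronecker factors block-diagonalize together, and (ii) note that $\dim\myspan\{\one\otimes I_d,p\}=d+1$ because the formation has at least one edge of nonzero length, so $p\notin\Range(\one\otimes I_d)$. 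Neither is a gap, just bookkeeping.
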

This proposition can be seen as a generation of the dimensional invariance property of infinitesimal bearing rigidity discovered in \cite{zhao2014TACBearing}.  
 

\subsection{Examples}
Fig.~\ref{fig_example_equivalent_cyclic} shows several examples of bearing equivalent formations modelled by directed graphs with cycles. As a consequence of the cyclic structure in these graphs, their bearing Laplacians have complex eigenvalues. For some special agents' positions, the associated bearing Laplacian can have eigenvalues with negative real parts.

Take the graph (a) in Fig.~\ref{fig_example_equivalent_cyclic} as an example. We randomly choose agents' positions and the bearing Laplacian of the graph (a) often delivers eigenvalues with negative real parts. One such position is 
\begin{align*}
p_1 &= [5.8009, 
    0.1698 ]^\T, p_2 =  [   1.2086,
    8.6271 ]^\T, \\
p_3 &= [
    4.8430
    8.4486 ]^\T,
    p_4 = [
    2.0941,
    5.5229]^\T .
    \end{align*}
The spectrum of the bearing Laplacian is computed as below 
\begin{align*}
\lambda_{1,2,3} &= 0, \lambda_{4,5} =   1.6400 \pm 0.7564i, \\
\lambda_{6,7} &=    0.8879 \pm 0.3799i, \lambda_{8} = -0.0559,
\end{align*}
which gives  only one  eigenvalue with negative real part. {\footnote{Matlab code for evaluating bearing equivalence in directed bearing formation examples is available at \url{https://www.dropbox.com/sh/5qzk9lqab0s448k/AACGEOr1r4riQVg-RqjmNDO4a?dl=0}. }}

For the example of graph (c) in Fig.~\ref{fig_example_equivalent_cyclic}, it can be constructed by adding agent 2 with three out-going edges to agents 1-3-4 in a  cyclic triangle formation  (which is bearing equivalent). Therefore, the formation of Fig.~\ref{fig_example_equivalent_cyclic}(c) is bearing equivalent   according to   Proposition~\ref{prop:growing_BP}. 
For all the bearing equivalent formations evaluated in the 2D space in Fig.~\ref{fig_example_equivalent_cyclic}, they remain bearing equivalent when agents' positions are lifted in the 3-D or higher-dimensional space according to the dimensional invariance property in Proposition~\ref{prop: dim_invariance}.

\begin{figure}[t]
\centering
\subfloat[]{
\centering
\resizebox{2.5cm}{!}{
\begin{tikzpicture}[
roundnode/.style={circle, draw=black, thick, minimum size=1.5mm,inner sep= 0.25mm},
squarednode/.style={rectangle, draw=red!60, fill=red!5, very thick, minimum size=5mm},
every node/.style={sloped,allow upside down}
]
\node[roundnode]   (v4)   at   (0,0) { };
\node[roundnode]   (v3)   at   (2,0) { };
\node[roundnode]   (v2)   at   (2,2) { };
\node[roundnode]   (v1)   at   (0,2) { };
\draw[-, very thick] (v2)--pic{arrow=latex}(v1);
\draw[-, very thick] (v1)--pic{arrow=latex}(v4);
\draw[-, very thick] (v4)--pic{arrow=latex}(v3);
\draw[-, very thick] (v3)--pic{arrow=latex}(v2);
\draw[-, very thick] (v4)--pic{arrow=latex}(v2);
\node (a1) at (-0.3,-0.3){$3$};
\node (a2) at (-0.3,2.3){$1$};
\node (a3) at (2.3,2.3){$2$};
\node (a4) at (2.3,-0.3){$4$};
\end{tikzpicture}
}
}
\hfill
\subfloat[]{
\centering
\resizebox{2.5cm}{!}{
\begin{tikzpicture}[
roundnode/.style={circle, draw=black, thick, minimum size=1.5mm,inner sep= 0.25mm},
squarednode/.style={rectangle, draw=red!60, fill=red!5, very thick, minimum size=5mm},
every node/.style={sloped,allow upside down}
]
\node[roundnode]   (v4)   at   (0,0) { };
\node[roundnode]   (v3)   at   (2,0) { };
\node[roundnode]   (v2)   at   (2,2) { };
\node[roundnode]   (v1)   at   (0,2) { };
\draw[-, very thick] (v2)--pic{arrow=latex}(v1);
\draw[-, very thick] (v1)--pic{arrow=latex}(v4);
\draw[-, very thick] (v4)--pic{arrow=latex}(v3);
\draw[-, very thick] (v2)--pic{arrow=latex}(v3);
\draw[-, very thick] (v4)--pic{arrow=latex}(v2);
\node (a1) at (-0.3,-0.3){$3$};
\node (a2) at (-0.3,2.3){$1$};
\node (a3) at (2.3,2.3){$2$};
\node (a4) at (2.3,-0.3){$4$};
\end{tikzpicture}
}
}
\hfill
\subfloat[]{
\centering
\resizebox{2.5cm}{!}{
\begin{tikzpicture}[
roundnode/.style={circle, draw=black, thick, minimum size=1.5mm,inner sep= 0.25mm},
squarednode/.style={rectangle, draw=red!60, fill=red!5, very thick, minimum size=5mm},
every node/.style={sloped,allow upside down}
]
\node[roundnode]   (v4)   at   (0,0) { };
\node[roundnode]   (v3)   at   (2,0) { };
\node[roundnode]   (v2)   at   (2,2) { };
\node[roundnode]   (v1)   at   (0,2) { };
\draw[-, very thick] (v2)--pic{arrow=latex}(v1);
\draw[-, very thick] (v1)--pic{arrow=latex}(v3);
\draw[-, very thick] (v2)--pic{arrow=latex}(v3);
\draw[-, very thick] (v2)--pic{arrow=latex}(v4);
\draw[-, very thick] (v3)--pic{arrow=latex}(v4);
\draw[-, very thick] (v4)--pic{arrow=latex}(v1);
\node (a1) at (-0.3,-0.3){$3$};
\node (a2) at (-0.3,2.3){$1$};
\node (a3) at (2.3,2.3){$2$};
\node (a4) at (2.3,-0.3){$4$};
\end{tikzpicture}
}
}
\caption{Examples of bearing \emph{equivalent} formations modelled by directed graphs with cycles.}
\label{fig_example_equivalent_cyclic}
\end{figure}
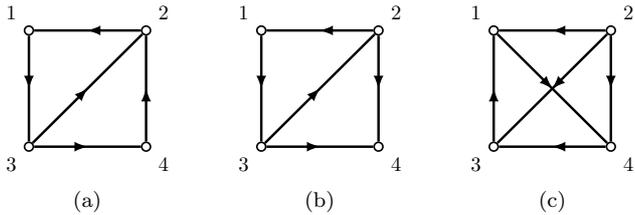

\section{Conclusion}\label{sec:con} \label{Sec:conclusion}
In this paper, motivated by the problem of bearing-based formation control in directed graphs, we present several conditions to characterize bearing equivalence for directed bearing formations. The notion of bearing equivalence is defined by the kernel equivalence of   bearing rigidity matrix and bearing Laplacian. These conditions for bearing equivalence are divided into two cases: bearing formations with acyclic directed graphs (that do not contain any cycle) and bearing formations with cyclic directed graphs (that contain at least one cycle). Several necessary and/or sufficient conditions are derived to reveal the properties of the spectrum and null space of the associated bearing Laplacian matrix. 

The notion of bearing equivalence emerges as one of the key properties to critically affect  stability and convergence of bearing-based formation systems modelled by directed graphs. In our future work, we will aim to present a complete characterization of bearing equivalence and apply the obtained conditions to solve the bearing-based formation control and network localization problems underpinned by directed graphs.

\section*{Acknowledgment}
The authors would like to thank Dr. Minh Hoang Trinh for helpful discussions and inputs to this work.

\bibliography{IFAC_bearing}             

\begin{thebibliography}{17}
\providecommand{\natexlab}[1]{#1}
\providecommand{\url}[1]{\texttt{#1}}
\providecommand{\urlprefix}{URL }
\expandafter\ifx\csname urlstyle\endcsname\relax
  \providecommand{\doi}[1]{doi:\discretionary{}{}{}#1}\else
  \providecommand{\doi}{doi:\discretionary{}{}{}\begingroup
  \urlstyle{rm}\Url}\fi

\bibitem[{Anderson et~al.(2008)Anderson, Yu, Fidan, and
  Hendrickx}]{anderson2008rigid}
Anderson, B.D.O., Yu, C., Fidan, B., and Hendrickx, J.M. (2008).
\newblock Rigid graph control architectures for autonomous formations.
\newblock \emph{IEEE Control Systems Magazine}, 28(6), 48--63.

\bibitem[{Arrigoni and Fusiello(2018)}]{arrigoni2018bearing}
Arrigoni, F. and Fusiello, A. (2018).
\newblock Bearing-based network localizability: A unifying view.
\newblock \emph{IEEE Transactions on Pattern Analysis and Machine
  Intelligence}, 41(9), 2049--2069.

\bibitem[{Bernstein(2018)}]{bernstein2018scalar}
Bernstein, D.S. (2018).
\newblock \emph{Scalar, Vector, and Matrix Mathematics: Theory, Facts, and
  Formulas}.
\newblock Princeton University Press.

\bibitem[{Harville(2008)}]{harville2008matrix}
Harville, D.A. (2008).
\newblock \emph{Matrix Algebra From a Statistician's Perspective}.
\newblock Springer Science \& Business Media.

\bibitem[{Hendrickx et~al.(2007)Hendrickx, Anderson, Delvenne, and
  Blondel}]{hendrickx2007directed}
Hendrickx, J.M., Anderson, B.D.O., Delvenne, J.C., and Blondel, V.D. (2007).
\newblock Directed graphs for the analysis of rigidity and persistence in
  autonomous agent systems.
\newblock \emph{International Journal of Robust and Nonlinear Control},
  17(10-11), 960--981.

\bibitem[{Karimian and Tron(2021)}]{karimian2021bearing}
Karimian, A. and Tron, R. (2021).
\newblock Bearing-only navigation with field of view constraints.
\newblock In \emph{Proc. of the American Control Conference (ACC)}, 20--25.
  IEEE.

\bibitem[{Lin et~al.(2016)Lin, Han, Zheng, and Fu}]{lin2016distributed}
Lin, Z., Han, T., Zheng, R., and Fu, M. (2016).
\newblock Distributed localization for {2-D} sensor networks with bearing-only
  measurements under switching topologies.
\newblock \emph{IEEE Transactions on Signal Processing}, 64(23), 6345--6359.

\bibitem[{Mesbahi and Egerstedt(2010)}]{mesbahi2010graph}
Mesbahi, M. and Egerstedt, M. (2010).
\newblock \emph{Graph Theoretic Methods in Multiagent Networks}.
\newblock Princeton University Press.

\bibitem[{Oh et~al.(2015)Oh, Park, and Ahn}]{oh2015survey}
Oh, K.K., Park, M.C., and Ahn, H.S. (2015).
\newblock A survey of multi-agent formation control.
\newblock \emph{Automatica}, 53, 424--440.

\bibitem[{Sun et~al.(2017)Sun, Park, Anderson, and Ahn}]{sun2017distributed}
Sun, Z., Park, M.C., Anderson, B.D.O., and Ahn, H.S. (2017).
\newblock Distributed stabilization control of rigid formations with prescribed
  orientation.
\newblock \emph{Automatica}, 78, 250--257.

\bibitem[{Tang et~al.(2021)Tang, Cunha, Hamel, and
  Silvestre}]{tang2021formation}
Tang, Z., Cunha, R., Hamel, T., and Silvestre, C. (2021).
\newblock Formation control of a leader--follower structure in three
  dimensional space using bearing measurements.
\newblock \emph{Automatica}, 128, 109567.

\bibitem[{Tang et~al.(2022)Tang, Cunha, Hamel, and Silvestre}]{tang2022relaxed}
Tang, Z., Cunha, R., Hamel, T., and Silvestre, C. (2022).
\newblock Relaxed bearing rigidity and bearing formation control under
  persistence of excitation.
\newblock \emph{Automatica}, 141, 110289.

\bibitem[{Trinh et~al.(2018)Trinh, Zhao, Sun, Zelazo, Anderson, and
  Ahn}]{trinh2018bearing}
Trinh, M.H., Zhao, S., Sun, Z., Zelazo, D., Anderson, B.D.O., and Ahn, H.S.
  (2018).
\newblock Bearing-based formation control of a group of agents with
  leader-first follower structure.
\newblock \emph{IEEE Transactions on Automatic Control}, 64(2), 598--613.

\bibitem[{Tron et~al.(2016)Tron, Thomas, Loianno, Daniilidis, and
  Kumar}]{tron2016bearing}
Tron, R., Thomas, J., Loianno, G., Daniilidis, K., and Kumar, V. (2016).
\newblock Bearing-only formation control with auxiliary distance measurements,
  leaders, and collision avoidance.
\newblock In \emph{Proc. of the IEEE 55th Conference on Decision and Control
  (CDC)}, 1806--1813. IEEE.

\bibitem[{Zhao and Zelazo(2015{\natexlab{a}})}]{zhao2015bearing}
Zhao, S. and Zelazo, D. (2015{\natexlab{a}}).
\newblock Bearing-based formation stabilization with directed interaction
  topologies.
\newblock In \emph{Proc. of the 54th IEEE Conference on Decision and Control
  (CDC)}, 6115--6120. IEEE.

\bibitem[{Zhao and Zelazo(2015{\natexlab{b}})}]{zhao2014TACBearing}
Zhao, S. and Zelazo, D. (2015{\natexlab{b}}).
\newblock Bearing rigidity and almost global bearing-only formation
  stabilization.
\newblock \emph{IEEE Transactions on Automatic Control}, 61(5), 1255--1268.

\bibitem[{Zhao and Zelazo(2019)}]{zhao2019bearing}
Zhao, S. and Zelazo, D. (2019).
\newblock Bearing rigidity theory and its applications for control and
  estimation of network systems: Life beyond distance rigidity.
\newblock \emph{IEEE Control Systems Magazine}, 39(2), 66--83.

\end{thebibliography}

\end{document}